\definecolor{malachite}{rgb}{0.04, 0.85, 0.32}
\tikzstyle{noeud}=[scale=0.8,draw,circle,fill=black]
\tikzstyle{blocked}=[line width=2.8pt,red]
\tikzstyle{free}=[line width=1.4pt,black]
\newcommand{\convexpath}[2]{
	[   
	create hullnodes/.code={
		\global\edef\namelist{#1}
		\foreach [count=\counter] \nodename in \namelist {
			\global\edef\numberofnodes{\counter}
			\node at (\nodename) [draw=none,name=hullnode\counter] {};
		}
		\node at (hullnode\numberofnodes) [name=hullnode0,draw=none] {};
		\pgfmathtruncatemacro\lastnumber{\numberofnodes+1}
		\node at (hullnode1) [name=hullnode\lastnumber,draw=none] {};
	},
	create hullnodes
	]
	($(hullnode1)!#2!-90:(hullnode0)$)
	\foreach [
	evaluate=\currentnode as \previousnode using \currentnode-1,
	evaluate=\currentnode as \nextnode using \currentnode+1
	] \currentnode in {1,...,\numberofnodes} {
		-- ($(hullnode\currentnode)!#2!-90:(hullnode\previousnode)$)
		let \p1 = ($(hullnode\currentnode)!#2!-90:(hullnode\previousnode) - (hullnode\currentnode)$),
		\n1 = {atan2(\y1,\x1)},
		\p2 = ($(hullnode\currentnode)!#2!90:(hullnode\nextnode) - (hullnode\currentnode)$),
		\n2 = {atan2(\y2,\x2)},
		\n{delta} = {-Mod(\n1-\n2,360)}
		in 
		{arc [start angle=\n1, delta angle=\n{delta}, radius=#2]}
	}
	-- cycle
}
\newskip\@bigflushglue \@bigflushglue = -100pt plus 1fil
\def\bigcentering{\let\\\@centercr\rightskip\@bigflushglue%
	\leftskip\@bigflushglue
	\parindent\z@\parfillskip\z@skip}
\DeclareMathOperator{\atan}{atan}
\DeclareMathOperator{\sh}{Sh}
\newcommand{\haut}{upper side\xspace}
\newcommand{\bas}{lower side\xspace}
\newcommand{\hcorde}{horizontal chord\xspace}
\newcommand{\vcorde}{vertical chord\xspace}
\newcommand{\hcordes}{horizontal chords\xspace}
\newcommand{\vcordes}{vertical chords\xspace}
\newcommand{\longversion}[1]{}
\newtheorem{theorem}{Theorem}
\newtheorem{definition}[theorem]{Definition}
\newtheorem{proposition}[theorem]{Proposition}
\newtheorem{lemma}[theorem]{Lemma}
\newtheorem{corollary}[theorem]{Corollary}
\newcommand{\kctp}{$k$-CTP\xspace}
\newcommand{\mcalf}{\mathcal{F}}
\newcommand{\card}[1]{\left| #1 \right|}
\newcommand{\dopt}{d_{\mbox{\scriptsize{opt}}}}
\newcommand{\popt}{P_{\mbox{\scriptsize{opt}}}}
\newcommand{\dopte}{d_{\emph{\scriptsize{opt}}}}
\newcommand{\dtr}[1]{d_{ #1 }^{\mbox{\scriptsize{Tr}}}}
\newcommand{\expo}{\textsc{ExpBalancing}\xspace}
\newcommand{\ve}[1]{\mbox{\bf #1}}
\title{The Canadian Traveller Problem on unit-weighted and  arbitrarily weighted outerplanar graphs\thanks{This work was supported by the International Research Center "Innovation Transportation and Production Systems" of the I-SITE CAP 20-25 and by the ANR project GRALMECO (ANR-21-CE48-0004).}}
\newcommand{\uca}{1}
\newcommand{\ulm}{2}
\newcommand{\inrae}{3}
\newcommand{\gscop}{4}
\author[\uca]{Laurent Beaudou}
\author[\uca]{Pierre Berg\'e}
\author[\uca,\ulm]{Vsevolod Chernyshev}
\author[\uca,\inrae]{Antoine Dailly}
\author[\uca]{Yan Gerard}
\author[\gscop]{Aurélie Lagoutte}
\author[\uca]{Vincent Limouzy}
\author[\uca]{Lucas Pastor}
\affil[\uca]{Universit\'e Clermont-Auvergne, CNRS, Mines de Saint-\'Etienne, Clermont-Auvergne-INP, LIMOS,
             France}
\affil[\ulm]{Ulm University,Germany}
\affil[\inrae]{Universit\'{e} Clermont Auvergne, INRAE, UR TSCF,
             France}
\affil[\gscop]{Univ. Grenoble Alpes, CNRS, Grenoble INP, G-SCOP, France}
\date{}
\begin{document}

\maketitle

\begin{abstract}
    We study the $k$-Canadian Traveller Problem, where a weighted graph $G=(V,E,\omega)$ with a source $s\in V$ and a target $t\in V$ are given. This problem also has a hidden input $E_* \subsetneq E$ of cardinality at most $k$ representing blocked edges. The objective is to travel from $s$ to $t$ with the minimum distance. At the beginning of the walk, the blockages $E_*$ are unknown: the traveller discovers that an edge is blocked when visiting one of its endpoints. Online algorithms, also called strategies, have been proposed for this problem and assessed with the competitive ratio, {\em i.e.}, the ratio between the distance actually traversed by the traveller divided by the distance he would have traversed knowing the blockages in advance.
	
	Even though the optimal competitive ratio is $2k+1$ even on unit-weighted planar graphs of treewidth 2, we design a polynomial-time strategy achieving competitive ratio 9 on unit-weighted outerplanar graphs. This value 9 also stands as a lower bound for this family of graphs as we prove that, for any $\varepsilon > 0$, no strategy can achieve a competitive ratio $9-\varepsilon$ on it. This comes actually from a strong connexion with another well-known online problem called the cow-path problem.

	Finally, we show that it is not possible to achieve a competitive ratio $e^{W(\frac{\ln k}{2})} - 1$ on arbitrarily weighted outerplanar graphs, where $W$ is the Lambert W function. This lower bound is asymptotically greater than $\frac{\ln k}{\ln \ln k}$.
\end{abstract}

\textit{Keywords : } Canadian Traveller Problem ; Online algorithms ; Competitive analysis ; Outerplanar graphs

\section{Introduction} \label{sec:intro}

The $k$-\textit{Canadian Traveller Problem} (\kctp) was introduced by Papadimitriou and Yannakakis~\cite{PaYa91}. It models the travel through a graph where some obstacles may appear. Given an undirected weighted graph $G=\left(V,E,\omega\right)$, with $\omega : E \rightarrow \mathbb{Q}^+$, and two of its vertices $s,t \in V$, a traveller walks from $s$ to $t$ on $G$ despite the existence of blocked edges $E_* \subsetneq E$ (also called \textit{blockages}), trying to contain the length of his walk. The traveller does not know which edges are blocked when he begins his journey. He discovers that an edge $e=uv$ is blocked, {\em i.e.}, belongs to $E_*$, when he visits one of its endpoints $u$ or $v$. The parameter $k$ is an upper bound on the number of blocked edges: $\card{E_*} \le k$.  Several variants have also been studied: where edges are blocked with a certain probability~\cite{AkSaAr16,BaSc91,BnFeSh09,fried2013complexity}, with multiple travellers~\cite{BeDeGuLe19,ShSa17}, where we can pay to sense remote edges~\cite{fried2013complexity}, or where we seek the shortest tour~\cite{HaXe23,LiHu14}. This problem has applications in robot routing for various kinds of logistics~\cite{AkSaAr16,AlYiAk21,BeBa23,EyKeHe10,LiScTh01}.

For a given walk on the graph, its \emph{cost} (also called distance) is the sum of the weights of the traversed edges. The objective is to minimize the cost of the walk used by the traveller to go from $s$ to $t$.
A pair $\left(G,E_*\right)$ is called a \textit{road map}.
All the road maps considered are feasible: there exists an $(s,t)$-path in $G\setminus E_*$, the graph $G$ deprived of $E_*$. In other words, there is always a way to reach target $t$ from source $s$ despite the blockages.

A solution to the \kctp is an online algorithm, called a \textit{strategy}, which guides the traveller through his walk on the graph : given the input graph, the history of visited nodes, and the information collected so far (here, the set of discovered blocked edges), it tells which neighbor of the current vertex the traveller should visit next. The quality of the strategy can be assessed with competitive analysis~\cite{BoEl98}. Roughly speaking, the \emph{competitive ratio} is the quotient between the distance actually traversed by the traveller and the distance he would have traversed knowing which edges are blocked in advance. The \kctp is PSPACE-complete~\cite{PaYa91,BaSc91} in its decision version that asks, given a positive number $r$ and the input weighted graph, whether there exists a strategy with competitive ratio at most $r$.
Westphal~\cite{We08} proved that no deterministic strategy achieves a competitive ratio less than $2k+1$ on all road maps satisfying $\card{E_*} \le k$.
Said differently, for any deterministic strategy $A$, there is at least one \kctp road map for which the competitive ratio of $A$ is at least $2k+1$.
Randomized strategies have also been studied, see \emph{e.g.}~\cite{BeWe15,DeHuLiSa14}.

\longversion{Randomized strategies, {\em i.e.}, strategies in which choices of directions depend on a random draw, have also been studied. Westphal~\cite{We08} proved that there is no randomized strategy achieving a ratio lower than $k+1$. 
Bender \emph{et al.}~\cite{BeWe15} studied graphs composed only of vertex-disjoint $(s,t)$-paths and proposed a polynomial-time strategy of ratio $k+1$. A slight revision of this strategy is reported in~\cite{ShSa19}.
Demaine {\em et al.} proposed a pseudo-polynomial-time randomized strategy on general graphs which achieves a competitive ratio $(1+\frac{\sqrt{2}}{2})k + O(1)$~\cite{DeHuLiSa14}.}

Our goal is to distinguish between graph classes on which the \kctp has competitive ratio $2k+1$ (the optimal ratio for general graphs) and the ones for which this bound can be improved. This direction of research has already been explored in~\cite{BeSa23}: there is a polynomial-time deterministic strategy which achieves ratio $\sqrt{2}k + O(1)$ on graphs with bounded-size maximum $(s,t)$-cuts. We pursue this study by focusing on a well-known family of graphs: outerplanar graphs, which are graphs admitting a planar embedding (without edge-crossing) where all the vertices lie on the outer face. In~\cite{BeSa23}, an outcome dedicated to a superclass of weighted outerplanar graphs implies that there is a strategy with ratio $2^{\frac{3}{4}}k + O(1)$ on them. 
Interestingly, however, even very simple unit-weighted planar graphs of treewidth $2$, consisting only of disjoint $(s,t)$-paths, admit the general ratio $2k+1$ as optimal~\cite{We08,ChChWuWu15}.

\medskip
\noindent\textbf{Our results and outline.} After some preliminaries (\Cref{sec:prelim}), we describe in \Cref{sec-unitweight} a polynomial-time strategy achieving a competitive ratio $9$ on instances where the input graph is a unit-weighted outerplanar graph:

\begin{restatable}{theorem}{unweightedOuterplanarRatio}
	\label{thm-unweightedOuterplanar-ratio9}
	There is a strategy with competitive ratio~9 for unit-weighted outerplanar graphs.
\end{restatable}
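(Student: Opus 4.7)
\medskip
\noindent\textbf{Proof plan.} The plan is to reduce the problem to a carefully controlled version of the classical two-ray cow-path problem, for which the doubling strategy is $9$-competitive. Fix an outerplanar embedding in which $s$ and $t$ both lie on the outer face. Removing $s$ and $t$ from the outer cycle splits it into two internally disjoint $(s,t)$-paths, which we call the upper and lower sides. By outerplanarity, every chord of $G$ is either \emph{horizontal}, joining a vertex of one side to a vertex of the other, or \emph{vertical}, joining two vertices of the same side; moreover, the horizontal chords are pairwise nested along $(s,t)$, which gives a clean linear order in which they can be used.

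My first step is to state the strategy as a doubling search. Starting from $s$, the traveller alternates between sides: in round $i$ it commits to a budget $B_i = 2^i$, advances along the currently tried side for at most $B_i$ unit steps, and retreats to $s$ if it meets a blockage and cannot make further progress. Horizontal chords act as allowed shortcuts: whenever the boundary on the current side is cut by a blockage but a horizontal chord crosses to a still-unexplored portion of the other side within the active budget, the traveller takes that chord rather than returning to $s$. The strategy stops as soon as the traveller arrives at $t$, or as soon as the combined explored region exposes an unblocked $(s,t)$-path, which it then commits to and follows.

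The analysis then attributes each traversed edge either to an aborted exploration round or to the final successful round. Let $\dopt$ denote the offline optimum. The worst case is the one in which every round except the last is foiled and blockages force the traveller to explore up to distance just short of $\dopt$ on both sides before committing; the standard cow-path geometric summation then gives total cost at most $9\, \dopt$. Unit weights enter in two essential places: they keep backtracking cheap relative to $\dopt$, and they ensure that any detour along a horizontal chord is never longer than the piece of boundary it replaces, so that chords can only help the strategy. Polynomial running time is immediate, since each move depends only on the current vertex, the two sides, and the known set of blocked edges.

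The main obstacle is handling chords without breaking the clean doubling analysis. Unlike the pure two-ray cow-path problem, the two sides here communicate through horizontal chords: a blockage can push the traveller from one side to the other without a full return to $s$, and vertical chords can short-circuit a side within itself. I would deal with this by showing, via a nesting argument on horizontal chords and a charging argument for vertical ones, that each such cross-over is amortized either against genuine progress toward $t$ or against the budget of the current round, so that the $9\, \dopt$ bound survives. In effect, the proof reduces an arbitrary trace of the strategy to a canonical trace on the two outer sides, and then invokes the cow-path $9$-competitiveness lemma on that canonical trace.
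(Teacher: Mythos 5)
Your high-level plan --- exponential balancing between the two sides with doubling budgets, same-side chords used as free shortcuts, and a cow-path-style charging argument --- is indeed the route the paper takes (its strategy \expo is exactly such a doubling search). Two cosmetic remarks first: your naming of chords is reversed with respect to the paper (the paper calls a chord joining the two sides \emph{vertical} and a chord inside one side \emph{horizontal}), and the paper starts the budget at $1$; neither is substantive.

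The genuine gap is in your last paragraph, which is precisely where all the work lies. When the traveller reveals an open cross-side chord $uv$ while standing at $u$, at distance $D+\alpha D$ from $s$ on its current side, it does not know whether $u$ or $v$ is the endpoint through which a shortest open $(s,t)$-path passes; ``take the chord and continue'' or ``charge the crossing to the current budget'' is not enough, because the recursive accounting $T+T'\le 9\,(d_{E_*}(s,u)+d_{E_*}(u,t))=9\dopt$ only goes through if the restart vertex provably lies on an optimal offline path, i.e.\ if one can certify $d_{E_*}(s,v)=d_{E_*}(s,u)+1$ (or the symmetric statement for $v$). The paper resolves this with an explicit probing sub-routine (Step~5 of \expo): after crossing to $v$ it walks back toward $s$ on the other side for up to $\alpha D-1$ steps to decide which endpoint is closer to $s$, and this probing cost, together with the non-standard budget update $D\leftarrow D+\alpha D$ it forces, must be folded into a competitiveness invariant (in state $\mathbf{A}$ of the balancing automaton, $d_{E_*}(s,x)=d_{E_*}(s,y)=D$ and $T\le 5D$, with $T\le 9\dopt$ throughout). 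That invariant, plus the case-by-case verification that $T\le 9\,d_{E_*}(s,u)$ at every possible recursion point, is the actual proof. Your proposal asserts that ``each cross-over is amortized against progress or against the budget'' without exhibiting the amortization, and as stated the claim is not obviously true: a cross-side chord can land you at a vertex much closer to $s$ than the distance already paid in the current round, so the crossing represents neither progress toward $t$ nor something chargeable to the round. Until that accounting is done explicitly, the reduction to the cow-path $9$-competitiveness lemma is not justified.
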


In the input outerplanar graph, vertices $s$ and $t$ lie on the outer face. The latter can be seen (provided 2-connectedness) as a cycle embedded in the plane, allowing to explore two sides when we travel from $s$ to $t$ (the two sides are the two internally disjoint $(s,t)$-paths forming the cycle). 
The core of the strategy consists in an exploration of both sides via a so-called \textit{exponential balancing}. Then, the most technical part consists in the handling of the chords linking both sides. We maintain a competitiveness invariant of the strategy which produces a final ratio of $9$.

Note that \Cref{thm-unweightedOuterplanar-ratio9} can be extended as a corollary to outerplanar graphs where the \emph{stretch}, defined as the ratio between the maximum and minimum weight, is bounded by some fixed $S$. In this case, the strategy has ratio $9S$.

Surprisingly, the \kctp on unit-weighted outerplanar graphs has connections with another online problem called the \textit{linear search} problem~\cite{BaCuRa93,BeNe70,Be63} or the \textit{cow-path} problem~\cite{KaReTa96}. In this problem, a traveller walks on an infinite line, starting at some arbitrary point, and its goal is to reach some target fixed by the adversary. It was shown that applying an exponential balancing on this problem is the optimal way, from the worst case point of view, to reach the target~\cite{BaCuRa93}. We prove in \Cref{subsec-lowerBound} that, on unit-weighted outerplanar graphs, the competitive ratio stated in \Cref{thm-unweightedOuterplanar-ratio9} is optimal, and we sketch how it could have been deduced from the literature on the linear search problem.

\begin{restatable}{theorem}{lowerbound}
    \label{thm:lower_bound}
    For any $\varepsilon > 0$, no deterministic strategy achieves competitive ratio $9-\varepsilon$ on all road maps $(G,E_*)$, where $G$ is a unit-weighted outerplanar graph.
\end{restatable}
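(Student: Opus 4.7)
The plan is to exploit the strong connection, already emphasized in the introduction, between the \kctp on unit-weighted outerplanar graphs and the linear search (cow-path) problem, whose optimal deterministic competitive ratio is exactly $9$ (Baeza-Yates, Culberson and Rawlins). The strategy is to build a family of unit-weighted outerplanar road maps on which any deterministic \kctp strategy induces a cow-path search, and then to transfer the classical cow-path lower bound into the desired bound of $9-\varepsilon$.

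Concretely, fix a large integer $N$ and take $G_N$ to be a cycle on $2N$ vertices with $s$ and $t$ placed at antipodal positions; this is a unit-weighted outerplanar graph whose two internally disjoint $(s,t)$-paths $P_1$ and $P_2$ play the role of the two half-lines in the cow-path problem. Given an arbitrary deterministic strategy $A$, I would describe an adversary that reveals blockages adaptively, as follows. The adversary maintains the two current exploration depths $d_1, d_2$ reached by $A$ on $P_1$ and $P_2$. Whenever $A$ tries to cross the frontier on side~$j$ (i.e.\ take an edge leading to a vertex strictly deeper than $d_j$), the adversary reveals that edge as blocked, forcing $A$ to turn back. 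The adversary only ever commits to such a blockage when $A$ is about to cross, so far fewer than $k$ edges are used as long as $A$ alternates at most $k$ times; by choosing $N$ and $k$ large enough this is never a binding constraint.

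The adversary must eventually commit to a road map, and this is where the cow-path reduction is used. Let $d^{(1)}, d^{(2)}, \dots$ denote the sequence of fresh depths reached by $A$ on the alternating sides (with $d^{(0)}=0$). After $A$ turns back for the $m$-th time, the total walk length so far is at least $2\sum_{i \le m} d^{(i)}$; the adversary then picks some $D$ slightly larger than $d^{(m)}$ on the side opposite to the one $A$ just abandoned, and freezes the road map so that this side is clear up to $t$ (in particular the blocking edges are placed past the vertex at distance $D$ from $s$, so the shortest $(s,t)$-path in $G\setminus E_*$ has length essentially $D$). The total cost of $A$ then exceeds $2\sum_{i\le m} d^{(i)} + D$, exactly the cow-path cost. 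Invoking the lower bound of $9$ for deterministic linear search, which minimizes $\limsup_m \bigl(1 + 2\sum_{i\le m} d^{(i)}/d^{(m)}\bigr)$ and is achieved by the doubling sequence $d^{(i)} = 2^i$, one concludes that for every $\varepsilon > 0$ there exist $N$ and $k$ such that the competitive ratio of $A$ on $(G_N, E_*)$ is at least $9 - \varepsilon$.

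The main obstacle is the careful formalization of the reduction: one has to argue that no sneaky behaviour of $A$ can beat a cow-path strategy. In particular, moves that return toward $s$ and switch sides without reaching a new depth incur cost but make no progress, so they only hurt $A$; the only useful moves are those that increase either $d_1$ or $d_2$, and for those the adversary's blockage-on-demand policy exactly mirrors the cow-path oracle. A secondary subtlety is the choice of $D$ at the moment of commitment: it must be just above $d^{(m)}$ so that the ratio is maximized, but also small enough that the optimal offline distance equals $D$ and not some larger value forced by a far-away blockage; placing the committing blockage one edge past the target-vertex at distance $D$ suffices. Granting these points, the argument reduces cleanly to the classical cow-path lower bound and yields the theorem.
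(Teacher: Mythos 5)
There is a genuine gap, and it lies in the choice of the underlying graph. On a bare cycle $G_N$ with $s$ and $t$ antipodal, your adversary cannot sustain the cow-path dynamics: a road map must keep $s$ and $t$ connected in $G\setminus E_*$, so the adversary may only ever block edges on one of the two sides, and the moment a single edge of, say, $P_1$ is revealed blocked, the traveller knows by feasibility that \emph{all} of $P_2$ is open and walks straight across it. The cost is then at most $2d_1+N$ against an optimum of $N$, i.e.\ a ratio of at most $3$; the adversary can never turn the traveller back a second time. The construction the paper uses to make the reduction work is the \emph{shell graph} $\sh_n$: the outer cycle plus all chords $v_i t$, with blockages placed only on those chords. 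There the "hider" is the unique open chord to $t$, discovering that one chord is blocked reveals nothing about the others or about which side hides the open one, and the outer-face edges are never blocked, so the traveller really is forced into an alternating deepening search. Your adversary argument and the appeal to the cow-path bound of $9$ would go through on that graph, but not on the one you chose.

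A secondary remark: even with the correct graph, the paper deliberately does not invoke the linear-search lower bound as a black box. It notes that the reduction carries small but nonzero frictions (a unit additive cost for crossing the final chord to $t$, integrality of the exploration depths, the fact that the optimum is $x_{j-1}+2$ rather than the raw depth), and instead encodes any strategy as a budget sequence $(x_i)$, derives the system of linear inequalities $\bigl(\sum_{i=0, i\neq j-1}^{j} x_i\bigr) - (3-\varepsilon)x_{j-1} \le 8-2\varepsilon$, and refutes it via Farkas' lemma with an explicit dual vector built from a degree-two linear recurrence. If you want to keep the reduction-style argument, you would need to make these additive-term and commitment details explicit rather than asserting that the cost "exactly" matches the cow-path cost.
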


Finally, in Section~\ref{sec-weighted}, we show that such constant competitive ratio cannot be achieved on arbitrarily weighted outerplanar graphs. Concretely, we identify a lower bound of competitiveness which is logarithmic in $k$ and depends on the Lambert W function~\cite{CoGo96}.

\begin{restatable}{theorem}{weightedBound}
    \label{thm:weightedbound}
    For any integer $k \ge 1$, no deterministic strategy can achieve a competitive ratio strictly less than $g(k) = e^{W(\frac{\ln k}{2})} - 1$ for the \kctp\ on weighted outerplanar graphs, where $W$ is the Lambert W function. Asymptotically, for sufficiently large $k$, $\frac{\ln k}{\ln \ln k} \le g(k) \le \ln k$.
\end{restatable}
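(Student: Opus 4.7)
The plan is to establish the lower bound by an adversarial construction: we exhibit, for each $k$, a weighted outerplanar graph $G_k$ with source $s$ and target $t$ such that any deterministic online strategy can be forced, by a choice of feasible blockage set $E_\ast \subseteq E(G_k)$ of size at most $k$, to incur cost at least $g(k)$ times the cost of the optimal offline $(s,t)$-path in $G_k \setminus E_\ast$. Setting $m = g(k)+1$, the quantitative target is to force ratio $m-1$ with roughly $m^{2m}$ blockages; rearranging $k = m^{2m}$ as $2m \ln m = \ln k$ is precisely the identity $m = e^{W(\ln k / 2)}$ that defines $g$.

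For the construction, I would start with a cycle carrying both $s$ and $t$ on its boundary (this is automatically outerplanar) and augment it with a family of non-crossing chords incident to $t$, so that outerplanarity is preserved. The two sides of the cycle serve as the two possible exploration directions, reminiscent of the cow-path problem that underlies the unit-weighted lower bound proved in the previous section; here, however, the edges along each side are weighted in geometric progression with common ratio $r$, so that the cost of a partial exploration depends exponentially on how far the traveller ventures. The chord weights are chosen so that every chord the traveller discovers could in principle lead directly to $t$, and blocking it turns an apparent shortcut into a sunk investment in distance that the traveller cannot recover. Playing adaptively over $m$ rounds, the adversary lets the traveller walk distance on the order of $r^\ell$ during round $\ell$ before blocking the corresponding shortcuts and forcing a retreat to $s$; summing over rounds yields total walked cost $\Theta(r^m/(r-1))$ against an unblocked $(s,t)$-path of length $\Theta(r^m / m)$, producing a competitive ratio $\Omega(m)$. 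Balancing the blockage budget against the number of rounds gives the relation $m^{2m} \le k$ and hence $m - 1 \le g(k)$.

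The asymptotic bounds $\frac{\ln k}{\ln \ln k} \le g(k) \le \ln k$ follow from the implicit relation $m \ln m = \tfrac{1}{2} \ln k$: the standard inequality $W(x) \le \ln x$ valid for $x \ge e$ yields $g(k) \le \ln k$ directly, and the lower bound follows from the expansion $W(x) = \ln x - \ln \ln x + o(1)$ as $x \to \infty$ together with straightforward algebraic manipulation of $m \ln m = \tfrac{1}{2} \ln k$.

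The main obstacle is the outerplanarity constraint. Unlike in general graphs, where Westphal's $2k+1$ bound is witnessed by $k+1$ internally disjoint $(s,t)$-paths, outerplanarity forbids this construction, so the chords available to the adversary must all share an endpoint (for instance, all incident to $t$) or be nested. The quantitative heart of the proof is showing that within this restricted structure, a blockage budget of $m^{2m}$ does suffice to force the traveller through $m$ geometrically spaced rounds of backtracking, so that the resulting bound $m-1$ coincides exactly with $e^{W(\ln k / 2)} - 1$.
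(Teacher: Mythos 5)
Your reverse-engineering of the quantitative target is correct: the paper forces ratio $i+1-\varepsilon$ using roughly $((i+1)!)^2\approx m^{2m}$ blockages, and inverting $2m\ln m=\ln k$ is exactly how the Lambert $W$ function enters. The gap is in the construction itself. A cycle with its two sides weighted in geometric progression and cheap chords into $t$ is, up to reparametrizing distance along the line, still the linear-search/cow-path instance of Section 3: a traveller who performs exponential balancing measured in \emph{traversed weight} rather than in hop count achieves a constant competitive ratio (essentially $9$) on it. So no choice of the common ratio $r$ and no adaptive blocking of chords on such a graph can force a ratio that grows with $k$. This is also where your accounting breaks down: you assert $\dopt=\Theta(r^m/m)$, but on your graph the optimal offline path is a prefix of one side followed by one cheap chord, hence of cost $\Theta(r^\ell)$ for some round index $\ell$, and a weight-doubling traveller finds it after paying only $O(r^\ell)$.

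The paper's construction is instead recursive, and that recursion is the missing idea. The level-$i$ graph $H_i$ is $(s,t)$-unbalanced: its upper side is a \emph{single} heavy edge $st$ of weight $S_i=(i+1)!$, and its lower side is a series concatenation of $N_i=i(i+1)$ copies of the level-$(i-1)$ hard instance $H_{i-1}$, with an $\varepsilon^*$-chord from each copy's exit $t_j$ to $t$. By the induction hypothesis, merely traversing a stretch of the lower side already costs the online traveller a factor $r_{i-1}$ over that stretch's offline cost; the adversary then exploits the trade-off between committing to the heavy edge and gambling on the cheap side (the paper's three cases: go to the terminus, explore deeply, explore a little), and each case yields ratio at least $r_{i-1}+1$. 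That ``$+1$ per level of nesting'' is what makes the ratio unbounded in $k$, and it has no analogue in a flat geometrically weighted chain, however the chords are placed. (Your treatment of the asymptotics is secondary and roughly in line with the paper, which states but does not derive them; note $e^{W(x)}=x/W(x)$, so both claimed bounds reduce to estimates on $W$.)
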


We summarize in \Cref{tbl:summary} the state-of-the-art of the competitive analysis of deterministic strategies for the \kctp, giving for each family an upper bound of competitiveness (\emph{i.e.}, a strategy with such ratio exists) and a lower bound (\emph{i.e.}, no strategy can achieve a smaller ratio)\footnote{In the table, $\Omega$ is the Landau notation, converse of $O$.}. Our contributions are framed. A part of this article was already presented as an extended abstract in MFCS'24~\cite{BeBe24}. However, Theorem~\ref{thm:weightedbound} (proof in Section~\ref{sec-weighted}) is a new result.

\begin{table}[t]
  \centering
\scalebox{0.92}{
\begin{tblr}{lcc}
  \hline
  \SetCell{c} Family of graphs &  upper bound & lower bound \\
  \hline
   unit-weighted planar of treewidth 2  &  $2k+1$~\cite{We08} & $2k+1$~\cite{ChChWuWu15,We08} \\
   bounded maximum edge $(s,t)$-cuts & $\sqrt{2}k + O(1)$~\cite{BeSa23}  & ?\\
  outerplanar & $2^{\frac{3}{4}}k + O(1)$~\cite{BeSa23}  & \framebox{$e^{W(\frac{\ln k}{2})} - 1 = \Omega(\frac{\ln k}{\ln \ln k})$} \\
  unit-weighted outerplanar & \framebox{9} & \framebox{9}\\
  \hline
\end{tblr}
}
\caption{Deterministic strategies performances for the \kctp.}
\label{tbl:summary}
\end{table}

\section{Definitions and first observations} \label{sec:prelim}

\subsection{Graph preliminaries}

We work on undirected connected weighted graphs $G=(V,E,\omega)$, where $\omega : E \rightarrow \mathbb{Q}^+$.
A graph is \textit{equal-weighted} (resp. \textit{unit-weighted}) if the value of $\omega(e)$ is the same (resp. $1$) for every edge $e\in E$. 
This article follows standard graph notations from~\cite{Di12}. We denote by $G\left[U\right]$ the subgraph of $G$ induced by $U \subseteq V$: $G\left[U\right] = \left(U,E\left[U\right],\omega_{|E[U]}\right)$; and by $G\setminus U$ the graph deprived of vertices in $U$: $G\setminus U = G\left[V\setminus U\right]$.
A \textit{simple $(u,v)$-path} is a sequence of pairwise different vertices between $u$ and $v$, while, in a $(u,v)$-\textit{walk}, vertices can be repeated.
The \emph{cost} (or \emph{traversed distance}) of a walk or a path is the sum of the weights of the edges it traverses.
A vertex $v$ is an \emph{articulation point} if $G \setminus \{v\}$ is not connected.

An $(s,t)$-\textit{separator} $X \subsetneq V \setminus \{s,t\}$ in graph $G$ is a set of vertices such that $s$ and $t$ are disconnected 
in graph $G\setminus X$. We denote by $R_G(s,X)$ (resp. $R_G(t,X)$) the \textit{source} (resp. \textit{target}) \textit{component} of separator $X$, which is a set made up of the vertices of $X$ together with all vertices reachable from $s$ (resp. $t$) in $G\setminus X$.

A graph is \emph{outerplanar} if it can be embedded in the plane in such a way that all vertices are on the outer face. 
An outerplanar graph is 2-connected if and only if the outer face forms a cycle. Given an embedding of a
2-connected outerplanar graph $G = (V, E)$ and two vertices $s$ and $t$, let $s \cdot p_1 \cdot p_2 \cdots p_h \cdot t \cdot q_1 \cdot q_2 \cdots q_\ell \cdot s$ be the cycle along the outer face of $G$ and let $S_1 = \{p_1, p_2, \dots, p_h\}$ and $S_2 = \{q_1, q_2, \dots, q_\ell\}$ with $V = \{s, t\} \cup S_1 \cup S_2$. 
We can slightly deform the embedding so that
$s$ and $t$ are aligned along the horizontal axis; since the outer face forms a cycle, we will refer to $S_1$ (resp. $S_2$) as the \textit{upper} (resp. \textit{lower}) \textit{side} of $G$.
A chord $xy$ of the cycle formed by the outer face is said to be \emph{$(s,t)$-vertical} (resp. \emph{$(s,t)$-horizontal}) if $x$ and $y$ belong to different sides (resp. to the same side), see \Cref{fig-outerplanar-exemple}. When $x=s$ or $y=t$, the chord is considered as $(s,t)$-horizontal and not $(s,t)$-vertical.
Any $(s,t)$-vertical chord (simply \textit{vertical chord} when the context is clear) is an $(s,t)$-separator.
Considering a set of \vcordes, we say that the \textit{rightmost} one has the minimal inclusion-wise target component. Due to planarity, the rightmost \vcorde is unique for any such set.

\begin{figure}[ht]
	\centering
	\scalebox{1}{\begin{tikzpicture}

    \newcommand\Square[1]{+(-#1,-#1) rectangle +(#1,#1)}

    \newcommand\clipbetween[2]
        {%
            \coordinate (c) at (barycentric cs:#1=1,#2=1);
            \clip (c) let \p1=(#1), \p2=(#2) in node{\n1} \Square{{veclen(\x2-\x1,0)/2}};
        }

    \node[noeud,scale=.7] (s) at (0,0) {};
    \draw (s) node[left,scale=1.5] {$s$};
    \node[noeud,scale=.7] (t) at (6,0) {};
    \draw (t) node[right,scale=1.5] {$t$};

    \begin{scope}
        \draw[bend left,opacity=0]
        (s)to 
        node[noeud,pos=0.1,opacity=1,scale=.5](p1){} 
        node[noeud,pos=0.2,opacity=1,scale=.5](p2){} 
        node[noeud,pos=0.3,opacity=1,scale=.5](p3){}
        node[noeud, pos=0.8, opacity=1, scale=.5](pk1){}
        node[noeud, pos=0.9, opacity=1, scale=.5](pk){}
        (t);

   \end{scope}

   \begin{scope}
        \clipbetween{s}{p3}
        \draw[bend left,opacity=1]
        (s)to 
        (t);
   \end{scope}

   \begin{scope}
        \clipbetween{p2}{pk1}
        \draw[bend left,opacity=1, dash pattern=on 3pt off 3pt]
        (s)to 
        (t);
   \end{scope}

   \begin{scope}
        \clipbetween{pk1}{t}
        \draw[bend left,opacity=1]
        (s)to 
        (t);
   \end{scope}
   
    \begin{scope}
        \draw[bend right,opacity=0]
        (s)to 
        node[noeud,pos=0.1,opacity=1,scale=.5](ql){} 
        node[noeud,pos=0.2,opacity=1,scale=.5](ql1){}
        node[noeud,pos=0.35,opacity=1,scale=.5](ql2){}
        node[noeud,pos=0.65,opacity=1,scale=.5](q4){} 
        node[noeud,pos=0.75,opacity=1,scale=.5](q3){} 
        node[noeud,pos=0.85,opacity=1,scale=.5](q2){} 
        node[noeud,pos=0.95,opacity=1,scale=.5](q1){}
        (t);
   \end{scope}

    \begin{scope}
        \clipbetween{s}{ql2}
        \draw[bend right,opacity=1]
        (s)to 
        (t);
   \end{scope}

    \begin{scope}
        \clipbetween{ql2}{q3}
        \draw[bend right,opacity=1, dash pattern=on 3pt off 3pt]
        (s)to 
        (t);
   \end{scope}

    \begin{scope}
        \clipbetween{q3}{t}
        \draw[bend right,opacity=1]
        (s)to 
        (t);
   \end{scope}

    \draw (p1) node[above=5 pt, scale=1.2] {$p_1$};
    \draw (p2) node[above=5 pt, scale=1.2] {$p_2$};
    \draw (p3) node[above=5 pt, scale=1.2] {$p_3$};
    \draw (pk1) node[above=5 pt, scale=1.2] {$p_{h-1}$};
    \draw (pk) node[above right=2 pt, scale=1.2] {$p_h$};

    \draw (q1) node[below=5 pt, scale=1.2] {$q_1$};
    \draw (q2) node[below=5 pt, scale=1.2] {$q_2$};
    \draw (q3) node[below=5 pt, scale=1.2] {$q_3$};
    \draw (q4) node[below=5 pt, scale=1.2] {$q_4$};
    \draw (ql) node[below=5 pt, scale=1.2] {$q_\ell$};
    \draw (ql1) node[below=5 pt, scale=1.2] {$q_{\ell-1}$};
    \draw (ql2) node[below=5 pt, scale=1.2] {$q_{\ell-2}$};
    
    \draw (p2) -- (ql);
    \draw (p2) -- (ql1);
    \draw (p3) -- (ql2);
    \draw (q4) -- (pk1);

    \draw[bend left] (q3) to (q1);
    \draw[bend left] (q4) to (q1);

\end{tikzpicture}}
	\caption{Example of an outerplanar graph: $p_{2}q_\ell$, $p_2q_{\ell-1}$, $p_3q_{\ell-2}$, and $p_{h-1}q_{4}$ are vertical chords and $q_1q_3$, $q_1q_4$ are horizontal chords.}
	
	\label{fig-outerplanar-exemple}
\end{figure}

\subsection{Problem definition and competitive analysis}

Let $G=(V,E,\omega)$ be a graph and $E_*$ represent a set of blocked edges. A \textit{road map} is an instance for the \kctp . 

\begin{definition}[Road maps]
	A pair $(G,E_*)$ is a \textit{road map} if $s$ and $t$ are connected in $G\setminus E_*$.
	\label{def-roadmap}
\end{definition}
In other words, there must be an $(s,t)$-path in the graph $G$ deprived of the blocked edges $E_*$. 
We can now formally introduce the \kctp .

\begin{definition}[\kctp]~
	
	\textbf{\emph{Input:}} A graph $G=(V,E,\omega)$, two vertices $s,t \in V$, and a set $E_*$ of unknown blocked edges such that $|E_*|\leq k$ and $(G,E_*)$ is a road map.
	
	\textbf{\emph{Objective:}} Traverse graph $G$ from $s$ to $t$ with minimum cost.
\end{definition}

A solution to the \kctp is an $(s,t)$-walk.
The set of blocked edges $E_*$ is a hidden input at the beginning of the walk. We say an edge is \textit{revealed} when one of its endpoints has already been visited. A \textit{discovered blocked edge} is a revealed edge which is blocked.
At any moment of the walk, we usually denote by $E_*' \subseteq E_*$ the set of discovered blocked edges, in other words the set of blocked edges for which we visited at least one endpoint. 
During the walk, we are in fact working on $G\setminus E_*'$ as discovered blocked edges can be removed from $G$.

We call a path \textit{blocked} if one of its edges was discovered blocked;
\textit{apparently open} if no blocked edge has been discovered on it for now (it may contain a blocked edge which has not been discovered yet); \textit{open} if we are sure that it does not contain any blocked edge (either all of its edges were revealed open, or it is apparently open and $\vert E_*' \vert = k$, or by connectivity considerations since $s$ and $t$ must stay connected in road maps).

For any $F\subseteq E_*$ and two vertices $x,y$ of $G$, let $d_F\left(G,x,y\right)$ be the cost of the shortest $(x,y)$-path in graph $G\setminus F$. If the context is clear, we will use $d_F\left(x,y\right)$.

We denote by $\popt$ some \emph{optimal offline path} of road map $(G,E_*)$: it is one of the shortest $(s,t)$-paths in the graph $G\setminus E_*$.
Its cost, the \emph{optimal offline cost}, given by $\dopt = d_{E_*}\left(s,t\right)$, is the distance the traveller would have traversed if he had known the blockages in advance.
Given a strategy $A$ for the \kctp, the \textit{competitive ratio}~\cite{BoEl98} $c_A(G,E_*)$ over road map $(G,E_*)$ is defined as the ratio between the cost $\dtr{A}\left(G,E_*\right)$ of the traversed walk and $\dopt$.
Formally:
$$
	c_A(G,E_*) = \frac{\dtr{A}\left(G,E_*\right)}{\dopt}.
$$

Given a monotone family of graphs $\mcalf$ (i.e. closed under taking subgraph), we say that a strategy $A$ admits a competitive ratio $c(k)$ for the family $\mcalf$ if it is an upper bound for all values $c_A\left(G,E_*\right)$ over all \kctp\ road maps $(G,E_*)$ such that $G \in \mcalf$. Conversely, we say that some ratio $c(k)$ cannot be achieved for family $\mcalf$ if, for every strategy $A$, there is a road map $(G,E_*)$ with $G \in \mcalf$ such that $c_A(G, E_*) > c(k)$.

\begin{figure}[t]
	\centering
	\scalebox{0.65}{\begin{tikzpicture}

\node[noeud] (s) at (1,4) {};
\node[scale=1.8] at (0.6,4) {$s$};
\node[noeud] (v1) at (6,1) {};
\node[noeud] (v2) at (6,2.5) {};
\node[noeud] (v3) at (6,4) {};
\node[noeud] (v4) at (6,5.5) {};
\node[noeud] (v5) at (6,7) {};
\node[noeud] (t) at (11,4) {};
\node[scale=1.8] at (11.4,4) {$t$};

\foreach \I in {1,...,5} {
    \draw[free] (s) to node[pos=0.75,above,scale=1.4] {1} (v\I);
}
\foreach \I in {1,2,3,5} {
    \draw[blocked] (t) to node[pos=0.75,above,scale=1.4,black] {$\varepsilon$} (v\I);
}
\draw[free] (v4) to node[pos=0.25,above,scale=1.4] {$\varepsilon$} (t);

\end{tikzpicture}}
	\caption{Westphal graph with $k=4$, as defined in~\cite{We08}}
	\label{fig:westphal}
\end{figure}
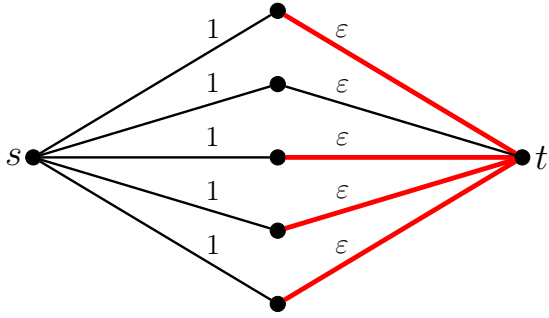

Westphal~\cite{We08} identified, for any integer $k$, a relatively trivial family of graphs for which any deterministic strategy achieves ratio at least $2k+1$ (see \Cref{fig:westphal}). These graphs are made up of only $k+1$ identical disjoint $(s,t)$-paths: they are planar and have treewidth $2$. As those paths are indistinguishable, the traveller might have to traverse $k$ of them before finding the open one. This outcome still works if we restrict ourselves to unit weights~\cite{ChChWuWu15}. 
Conversely, there are two strategies in the literature achieving competitive ratio $2k+1$ on general graphs: \textsc{reposition}~\cite{We08} and \textsc{comparison}~\cite{XuHuSuZh09}.

Note that articulation points allow a preliminary decomposition and simplification of any input graph, before even exploring:

\begin{lemma}
	\label{lem-articulationPoints}
    Let $\mcalf$ be a monotone family of graphs, and assume that we have a strategy $A$ achieving competitive ratio $C$ on graphs of $\mcalf$ that do not contain any articulation point. Then, there exists a strategy $A'$ achieving the same competitive ratio $C$ on all graphs of $\mcalf$. 
\end{lemma}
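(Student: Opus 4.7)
The plan is to exploit the classical block-cut tree decomposition and run $A$ independently on each block through which any $(s,t)$-path must pass.

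First, I would compute the block-cut tree of $G$ and identify the unique path from the block containing $s$ to the block containing $t$. This gives a sequence of blocks $B_1,\ldots,B_m$ and articulation points $a_0=s,\,a_1,\ldots,a_{m-1},\,a_m=t$ such that $a_{i-1},a_i\in V(B_i)$. Since every $(s,t)$-walk in $G$ must cross each $a_i$ in order, every $(s,t)$-path decomposes as a concatenation of $(a_{i-1},a_i)$-paths, each entirely contained in $B_i$. Because $(G,E_*)$ is a feasible road map, each pair $(a_{i-1},a_i)$ is connected in $B_i\setminus E_*$, so $(B_i,E_*\cap E(B_i))$ is itself a feasible road map (with source $a_{i-1}$ and target $a_i$).

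The strategy $A'$ then proceeds block by block: for $i=1,\ldots,m$, it applies $A$ on the instance $(B_i,E_*\cap E(B_i))$ with source $a_{i-1}$ and target $a_i$. Two points make this well defined. First, each $B_i$ belongs to $\mathcal{F}$ by monotone closure, and by construction contains no articulation point (so $A$'s guarantee applies); degenerate blocks consisting of a single edge $a_{i-1}a_i$ are handled trivially since that edge cannot be blocked in a feasible road map. Second, the number of blockages in $B_i$ is $k_i\le k$, and the total bound $\sum_i k_i\le k$ is respected automatically.

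For the competitive ratio, let $d_i^\star$ denote the optimal offline cost from $a_{i-1}$ to $a_i$ in $B_i\setminus E_*$. Because each $(s,t)$-path in $G\setminus E_*$ visits $a_0,\ldots,a_m$ in order and the $(a_{i-1},a_i)$-sub-path lies in $B_i$, we have $\dopt=\sum_{i=1}^{m} d_i^\star$. By assumption on $A$, the cost incurred by $A'$ on block $B_i$ is at most $C\cdot d_i^\star$, so the total cost traversed by $A'$ is at most $\sum_{i=1}^{m} C\cdot d_i^\star = C\cdot \dopt$, yielding competitive ratio $C$ as required.

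The only subtlety—really the main thing to justify carefully—is the identity $\dopt=\sum_i d_i^\star$, which relies on the fact that leaving a block $B_i$ from the articulation structure can never shorten the path between $a_{i-1}$ and $a_i$; this is an immediate consequence of the defining property of block-cut trees, so the argument is clean.
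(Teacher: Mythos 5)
Your proof is correct and is essentially the paper's argument: the paper recurses on one articulation point at a time (splitting into source and target components and discarding the rest), which when unrolled yields exactly your chain of blocks $B_1,\ldots,B_m$ along the block-cut tree path, and both arguments rest on the same additivity $\dopt=\sum_{i=1}^m d_i^\star$ across the mandatory cut vertices. The only cosmetic difference is that you compute the whole decomposition upfront rather than recursively.
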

\begin{proof}
    The strategy $A'$ goes as follows: let $(G, E_*)$ be a road map with $G\in \mcalf$. If $G$ does not contain any articulation point, apply strategy $A$. Otherwise, let $z$ be an articulation point of $G$. If $\{z\}$ is not an $(s,t)$-separator, then, recursively apply strategy $A'$ on $R_G(s,\{z\})$, which is both the source and the target component, to reach $t$ from $s$. Otherwise (so $\{z\}$ is an $(s,t)$-separator), recursively apply strategy $A'$ on the source component $R_G(s,\{z\})$ to reach $z$ from $s$, then recursively apply strategy $A'$ on the target component $R_G(t,\{z\})$ to reach $t$ from $z$. The procedure is illustrated in \Cref{fig-unweightedOuterplanarGraphs-1-decomposition}. 
    
    \begin{figure}[h]
	\centering
	\scalebox{0.6}{\begin{tikzpicture}
	
	\node (G) at (0,7) {
		\begin{tikzpicture}
			\node[noeud] (s) at (0,1) {};
			\node[noeud] (u1t) at (1,2) {};
			\node[noeud] (u2t) at (2,2) {};
			\node[noeud] (u1b) at (1,0) {};
			\node[noeud] (u2b) at (2,0) {};
			\node[noeud] (t1) at (3,1) {};
			
			\draw (s)to(u1t);
			\draw (s)to(u1b);
			\draw (u1t)to(u2t);
			\draw (u1b)to(u2b);
			\draw (u2t)to(t1);
			\draw (u2b)to(t1);
			\draw (u1t)to(t1);
			\draw (u1b)to(t1);
			
			\node[noeud] (t2) at (4,1) {};
			
			\draw (t1)to(t2);
			
			\node[noeud] (v1t) at (5,2) {};
			\node[noeud] (v2t) at (6,2) {};
			\node[noeud] (v3t) at (7,2) {};
			\node[noeud] (v4t) at (8,2) {};
			\node[noeud] (v5t) at (9,2) {};
			\node[noeud] (v1b) at (5,0) {};
			\node[noeud] (v2b) at (6,0) {};
			\node[noeud] (v3b) at (7,0) {};
			\node[noeud] (v4b) at (8,0) {};
			\node[noeud] (v5b) at (9,0) {};
			\node[noeud] (t) at (10,1) {};
			
			\draw (t2)to(v1t);
			\draw (v1t)to(v2t);
			\draw (v2t)to(v3t);
			\draw (v3t)to(v4t);
			\draw (v4t)to(v5t);
			\draw (t2)to(v1b);
			\draw (v1b)to(v2b);
			\draw (v2b)to(v3b);
			\draw (v3b)to(v4b);
			\draw (v4b)to(v5b);
			\draw (v5t)to(t);
			\draw (v5b)to(t);
			\draw (v3t)to(v3b);
			\draw (v4t)to(v4b);
			\draw (v5t)to(v4b);
			\draw (v1t)to(v3b);
			
			\draw (s) node[left,scale=2] {$s$};
			\draw (t) node[right,scale=2] {$t$};
			
			\node[noeud] (i01) at (1,3) {};
			\draw (u1t)to(i01);
			\node[noeud] (i02) at (2,-1) {};
			\node[noeud] (i03) at (1,-2) {};
			\node[noeud] (i04) at (3,-2) {};
			\node[noeud] (i05) at (2,-3) {};
			\draw (u2b)to(i02)to(i03)to(i05)to(i04)to(i02);
			\node[noeud] (i06) at (5,-1) {};
			\node[noeud] (i07) at (4,-2) {};
			\node[noeud] (i08) at (5,-2) {};
			\node[noeud] (i09) at (6,-2) {};
			\node[noeud] (i10) at (4.5,-3) {};
			\draw (v1b)to(i06)to(i07)to(i10)to(i08)to(i09)to(i06)to(i08);
			\node[noeud] (i11) at (5,3) {};
			\node[noeud] (i12) at (6,4) {};
			\node[noeud] (i13) at (7,3) {};
			\draw (v2t)to(i11)to(i12)to(i13)to(v2t);
			\node[noeud] (i14) at (7,-1) {};
			\node[noeud] (i15) at (8,-2) {};
			\node[noeud] (i16) at (9,-1) {};
			\draw (v4b)to(i14)to(i15)to(i16)to(v4b);
			\node[noeud] (i17) at (11,0) {};
			\draw (t)to(i17);
			\node[noeud] (i18) at (11,2) {};
			\node[noeud] (i19) at (12,3) {};
			\node[noeud] (i20) at (13,3) {};
			\node[noeud] (i21) at (14,2) {};
			\node[noeud] (i22) at (13,1) {};
			\node[noeud] (i23) at (12,1) {};
			\draw (t)to(i18)to(i19)to(i20)to(i21)to(i22)to(i23)to(i18);
			\node[noeud] (i24) at (12,-1) {};
			\node[noeud] (i25) at (13,0) {};
			\draw (i17)to(i23)to(i25)to(i24)to(i17);
		\end{tikzpicture}
	};
	
	\node (Gdec) at (0,0) {
		\begin{tikzpicture}
			\node[noeud] (s) at (0,1) {};
			\node[noeud] (u1t) at (1,2) {};
			\node[noeud] (u2t) at (2,2) {};
			\node[noeud] (u1b) at (1,0) {};
			\node[noeud] (u2b) at (2,0) {};
			\node[noeud] (t1) at (3,1) {};
			
			\draw (s)to(u1t);
			\draw (s)to(u1b);
			\draw (u1t)to(u2t);
			\draw (u1b)to(u2b);
			\draw (u2t)to(t1);
			\draw (u2b)to(t1);
			\draw (u1t)to(t1);
			\draw (u1b)to(t1);
			
			\node[noeud] (s2) at (4,1) {};
			\node[noeud] (t2) at (5,1) {};
			
			\draw (s2)to(t2);
			
			\node[noeud] (s3) at (6,1) {};
			\node[noeud] (v1t) at (7,2) {};
			\node[noeud] (v2t) at (8,2) {};
			\node[noeud] (v3t) at (9,2) {};
			\node[noeud] (v4t) at (10,2) {};
			\node[noeud] (v5t) at (11,2) {};
			\node[noeud] (v1b) at (7,0) {};
			\node[noeud] (v2b) at (8,0) {};
			\node[noeud] (v3b) at (9,0) {};
			\node[noeud] (v4b) at (10,0) {};
			\node[noeud] (v5b) at (11,0) {};
			\node[noeud] (t) at (12,1) {};
			
			\draw (s3)to(v1t);
			\draw (v1t)to(v2t);
			\draw (v2t)to(v3t);
			\draw (v3t)to(v4t);
			\draw (v4t)to(v5t);
			\draw (s3)to(v1b);
			\draw (v1b)to(v2b);
			\draw (v2b)to(v3b);
			\draw (v3b)to(v4b);
			\draw (v4b)to(v5b);
			\draw (v5t)to(t);
			\draw (v5b)to(t);
			\draw (v3t)to(v3b);
			\draw (v4t)to(v4b);
			\draw (v5t)to(v4b);
			\draw (v1t)to(v3b);
			
			\draw (s) node[left,scale=2] {$s=s_1$};
			\draw (t1) node[above,scale=1.75,yshift=0.5mm] {$t_1$};
			\draw (s2) node[above,scale=1.75,yshift=0.5mm] {$s_2$};
			\draw (t2) node[above,scale=1.75,yshift=0.5mm] {$t_2$};
			\draw (s3) node[above,scale=1.75,yshift=0.5mm] {$s_3$};
			\draw (t) node[right,scale=2] {$t_3=t$};
			
			\draw[rounded corners,dashed] (2.75,0.75) rectangle (4.25,1.25);
			\draw[rounded corners,dashed] (4.75,0.75) rectangle (6.25,1.25);
		\end{tikzpicture}
	};
	
	\draw[->,line width = 0.7pt, double] (G)to(Gdec);
	
\end{tikzpicture}}
	\caption{Decomposing the graph into components with no articulation points and removing the useless components (the vertices in a dashed rectangle are the same in the original graph). }
	\label{fig-unweightedOuterplanarGraphs-1-decomposition}
\end{figure}
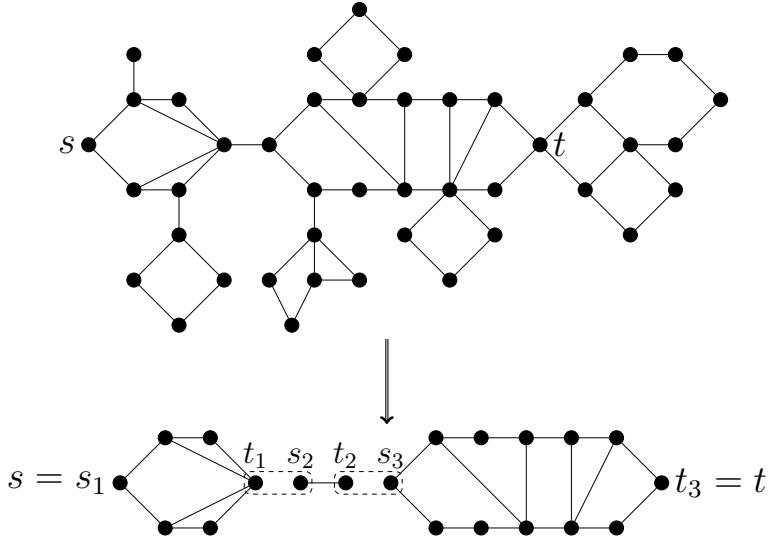
    
    We prove by induction on the number $p$ of articulation points that $A'$ terminates and achieves competitive ratio $C$. The base case $p=0$ holds by property of $A$. 
   
    For the inductive step, we distinguish two cases. If $\{z\}$ is not an $(s,t)$-separator, the walk we obtain is of length at most $C\dopt$, which gives competitive ratio $C$. Otherwise, the length of the whole walk at most $Cd_{E_*}(s,z)+Cd_{E_*}(z,t)$. Since $z$ is an $(s,t)$-separator, $z \in \popt$ and we have $\dopt = d_{E_*}(s,z) + d_{E_*}(z,t)$, which concludes the proof.
\end{proof}

\section{Optimal competitive ratio 9 for unit-weighted outerplanar graphs}
\label{sec-unitweight}

We propose a polynomial-time strategy called \expo dedicated to unit-weighted outerplanar graphs. We show that it achieves competitive ratio~9 for this family of graphs, which we will later prove is optimal (see \Cref{thm:lower_bound}). 

\subsection{Presentation of the strategy} \label{subsec-presentation}

First, note that \Cref{lem-articulationPoints} allows us to work on outerplanar graphs without articulation points. 
The input is a unit-weighted 2-connected outerplanar graph $G$ and two vertices $s$ and $t$. We provide a detailed description of the strategy \expo\ that we follow to explore the graph $G$.

\begin{enumerate}
	\item \label{step: reach t} \textbf{Reaching $t$}. If, at any point in our exploration, we reach $t$, then we exit the algorithm and return the processed walk.
	
	\item \label{step: reach horizontal chord} \textbf{Horizontal chords treatment}. If, at any point in our exploration, we visit a vertex $u \in S_i$, $i\in \{1,2\}$, incident with an open \hcorde $uv$ revealed for the first time, then we can remove all the vertices on side $S_i$ that lie between $u$ and $v$ on the outer face. Said differently, we get rid of the vertices which are surrounded by the chord $uv$. If several \hcordes incident with $u$ are open, then it suffices to apply this rule to the chord which surrounds all others. This procedure comes from the observation that, due to both unit weights and planarity, the open \hcorde $uv$ with the rightmost $v$ is necessarily the shortest way to go from $u$ to $t$ on side $S_i$, and thus visiting the vertices surrounded by it will occur an extra, useless cost. 
	
	\item \label{step: exponential strategy} \textbf{Exponential balancing}. The core \textit{exponential balancing} principle of the strategy consists in alternately exploring sides within a given budget that doubles each time we switch sides. The budget is initially set to 1. Hence, we walk first on side $S_1$ with budget $1$, second on side $S_2$ with budget $2$, then on side $S_1$ with budget $4$, and so on. We say each budget corresponds to an \textit{attempt}. During each attempt, we traverse a path starting from the source $s$ and stay exclusively on some side $S_i$, $i \in \{1,2\}$. As evoked in the previous step, at each newly visited vertex, we use an open \hcorde from our position which brings us as close as possible to $t$ on our side. Either a \hcorde is open and we use the one which surrounds all other open chords, or if no such chord is open, we pursue our walk on the outer face.

	This balancing process can be described on an automaton depicted in \Cref{fig-automateDeBeaudou} which will be particularly useful in the analysis of this strategy. Here, we assume that we neither are completely blocked on one side nor reveal an open \vcorde . We will handle these cases in Steps~\ref{step: blocked edge}-\ref{step: open chord between C and A}.
	
	 We start our walk on $s$ (state $\mathbf{E_1}$), make an attempt on an arbitrary side (say $S_1$) with budget $1$ (state $\mathbf{E_2}$), and decide to come back to $s$ if $t$ was not reached. During our first attempt on side $S_2$ with budget 2, we cross a first edge and reach state $\mathbf{A}$. Then, we cross a second edge if we are not blocked, but this part of the journey corresponds to the transition between states $\mathbf{A}$ and $\mathbf{B}$. The automaton works as follows:
	 
\begin{figure}[t]
	\centering
	\scalebox{0.6}{\begin{tikzpicture}
	\node (s1) at (2,10) {
		\begin{tikzpicture}
			\node[noeud] (s) at (0,0) {};
			\draw (s) node[left,scale=2] {$s$};
			\node[noeud] (t) at (4,0) {};
			\draw (t) node[right,scale=2] {$t$};
			\draw[bend left,opacity=0.5] (s)to(t);
			\draw[bend right,opacity=0.5] (s)to(t);
			
			\draw[line width=0.25mm] (s) circle (0.25);
			
			\draw[rounded corners,line width=1mm] (1,1) rectangle (3,2);
			\draw[rounded corners,line width=1mm] (-1,-1) rectangle (5,1);
			\draw (2,1.5) node[scale=2.5] {$\mathbf{E_1}$};
		\end{tikzpicture}
	};
	
	\node (s2) at (10,10) {
		\begin{tikzpicture}
			\node[noeud] (s) at (0,0) {};
			\draw (s) node[left,scale=2] {$s$};
			\node[noeud] (t) at (4,0) {};
			\draw (t) node[right,scale=2] {$t$};
			\draw[bend left,opacity=0.5,postaction={-,draw,opacity=1,line width=0.25mm,dash pattern=on 10pt off 10cm}] (s)to node[noeud,pos=0.1,opacity=1](u){} (t);
			\draw[bend right,opacity=0.5] (s)to(t);
			
			\draw[line width=0.25mm] (u) circle (0.25);
			
			\draw[rounded corners,line width=1mm] (1,1) rectangle (3,2);
			\draw[rounded corners,line width=1mm] (-1,-1) rectangle (5,1);
			\draw (2,1.5) node[scale=2.5] {$\mathbf{E_2}$};
		\end{tikzpicture}
	};
	
	\node (A) at (6,5) {
		\begin{tikzpicture}
			\node[noeud] (s) at (0,0) {};
			\draw (s) node[left,scale=2] {$s$};
			\node[noeud] (t) at (4,0) {};
			\draw (t) node[right,scale=2] {$t$};
			\draw[bend left,opacity=0.5,postaction={-,draw,opacity=1,line width=0.25mm,dash pattern=on 45pt off 10cm}] (s)to node[pos=0.2,above,opacity=1]{$D$} node[noeud,pos=0.4,opacity=1](u){} (t);
			\draw[bend right,opacity=0.5,postaction={-,draw,opacity=1,line width=0.25mm,dash pattern=on 45pt off 10cm}] (s)to node[pos=0.2,below,opacity=1]{$D$} node[noeud,pos=0.4,opacity=1](w){} (t);
			
			\draw[line width=0.25mm] (u) circle (0.25);
			\draw (u) node[above=6 pt, scale=1] {$x$};
			
			\draw (w) node[below=5 pt, scale=1] {$y$};
			
			\draw[rounded corners,line width=1mm] (1,1.3) rectangle (3,2.3);
			\draw[rounded corners,line width=1mm] (-1,-1.3) rectangle (5,1.3);
			\draw (2,1.8) node[scale=2.5] {\textbf{A}};
		\end{tikzpicture}
	};
	
	\node (B) at (12,0) {
		\begin{tikzpicture}
			\node[noeud] (s) at (0,0) {};
			\draw (s) node[left,scale=2] {$s$};
			\node[noeud] (t) at (4,0) {};
			\draw (t) node[right,scale=2] {$t$};
			\draw[bend left,opacity=0.5,postaction={-,draw,opacity=1,line width=0.25mm,dash pattern=on 75pt off 10cm}] (s)to node[pos=0.2,above,opacity=1]{$D$} node[noeud,pos=0.4,opacity=1](u){} node[pos=0.55,below,opacity=1]{$D$} node[noeud,pos=0.7,opacity=1](v){} (t);
			\draw[bend right,opacity=0.5,postaction={-,draw,opacity=1,line width=0.25mm,dash pattern=on 45pt off 10cm}] (s)to node[pos=0.2,below,opacity=1]{$D$} node[noeud,pos=0.4,opacity=1](w){} (t);
			
			\draw[line width=0.25mm] (v) circle (0.25);
			
			\draw (u) node[above=5 pt, scale=1] {$x$};
			
			\draw (v) node[above=6 pt, scale=1] {$x'$};
			
			\draw (w) node[below=5 pt, scale=1] {$y$};
			
			\draw[rounded corners,line width=1mm] (1,1.3) rectangle (3,2.3);
			\draw[rounded corners,line width=1mm] (-1,-1.3) rectangle (5,1.3);
			\draw (2,1.8) node[scale=2.5] {\textbf{B}};
		\end{tikzpicture}
	};
	
	\node (C) at (0,0) {
		\begin{tikzpicture}
			\node[noeud] (s) at (0,0) {};
			\draw (s) node[left,scale=2] {$s$};
			\node[noeud] (t) at (4,0) {};
			\draw (t) node[right,scale=2] {$t$};
			\draw[bend left,opacity=0.5,postaction={-,draw,opacity=1,line width=0.25mm,dash pattern=on 75pt off 10cm}] (s)to node[pos=0.2,above,opacity=1]{$D$} node[noeud,pos=0.4,opacity=1](u){} node[pos=0.55,below,opacity=1]{$D$} node[noeud,pos=0.7,opacity=1](v){} (t);
			\draw[bend right,opacity=0.5,postaction={-,draw,opacity=1,line width=0.25mm,dash pattern=on 45pt off 10cm}] (s)to node[pos=0.2,below,opacity=1]{$D$} node[noeud,pos=0.4,opacity=1](w){} (t);
			
			\draw[line width=0.25mm] (w) circle (0.25);
			
			\draw (u) node[above=5 pt, scale=1] {$x$};
			
			\draw (v) node[above=5 pt, scale=1] {$x'$};
			
			\draw (w) node[below=6 pt, scale=1] {$y$};
			
			\draw[rounded corners,line width=1mm] (1,1.3) rectangle (3,2.3);
			\draw[rounded corners,line width=1mm] (-1,-1.3) rectangle (5,1.3);
			\draw (2,1.8) node[scale=2.5] {\textbf{C}};
		\end{tikzpicture}
	};
	
	\draw[->,line width = 0.7pt, double,bend left] (-1,12.5)to(s1.north);
	\draw[->,line width = 0.7pt, double] (s1)to(s2);
	\draw[->,line width = 0.7pt, double,out=270,in=90] (s2.south)to(A.north);
	\draw[->,line width = 0.7pt, double,bend left] (A.east)to(B.north);
	\draw[->,line width = 0.7pt, double] (B)to(C);
	
	\draw[->,line width = 0.7pt, double,bend left] (C.north)to node[midway,above left,scale=1.5]{update: $D \leftarrow 2D$} (A.west);
\end{tikzpicture}}
	\caption{Representation of the exponential balancing divided into three different states. The circled vertex is the one we are currently exploring.}
	\label{fig-automateDeBeaudou}
\end{figure}
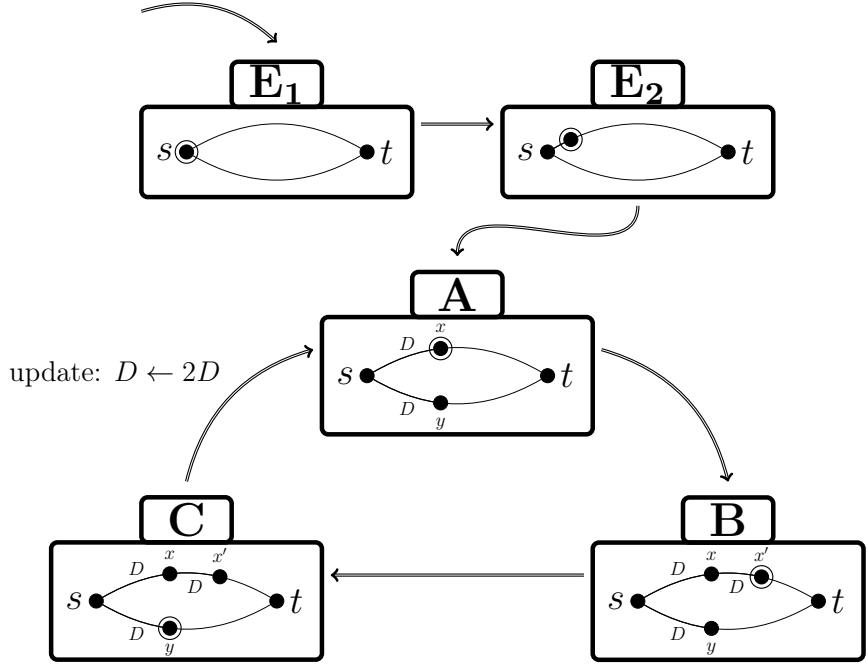

	\begin{itemize}
		\item In state $\mathbf{A}$, we have explored $D$ vertices on each side (in the description above, $D=1$ when we first arrive in state $\mathbf{A}$). Call $x$ and $y$ the last explored vertices on each side, assume we are on $x$. The current budget is $2D$ and we pursue our attempt on the side of $x$.
		\item We then explore at most $D$ more vertices on the side of $x$. We reach state $\mathbf{B}$.
		\item We then go back to $y$ through $s$, reaching state $\mathbf{C}$.
		\item We explore at most $D$ more vertices on the side of $y$. We go back to state $\mathbf{A}$ with an updated value of $D$ that is doubled, update $x$ and $y$, and the sides are switched.
	\end{itemize}

	\item \label{step: blocked edge}  \textbf{Bypassing a blocked side}. If, during some attempt on side $S_i$, we are completely blocked (there is no open $(s,t)$-path on $G[S_i]\setminus E_*'$) before reaching the budget, hence exploring $\alpha D$ ($\alpha<1$) instead of $D$ (see \Cref{fig: blocked edge between A and B,fig: blocked edge between C and A}), then we backtrack to $s$ and pursue the balancing on the other side $S_j$ ($j \in \{1,2\}, j\neq i$). However, we forget any budget consideration: we travel until we either reach $t$ or visit the endpoint $u$ of some open \vcorde $uv$. In case there are several open \vcordes incident with $u$ revealed at the same time, we consider the rightmost one. At this moment, we update the current graph $G\setminus E_*'$ by keeping only the target component of separator $\{u,v\}$ and considering $u$ as a new source. Concretely, we concatenate the current walk computed before arriving at $u$ with a recursive call of \expo on input $(G[R_G(t,\{u,v\})],u,t)$.
	
	\item \label{step: open chord between A and B} \textbf{Handling open \vcordes between states $\mathbf{A}$ and $\mathbf{B}$}. If, during some attempt on side $S_i$, especially in the transition between states $\mathbf{A}$ and $\mathbf{B}$, we reveal an open \vcorde $uv$, $u \in S_i$, after having explored distance $\alpha D$ (parameter $\alpha$ is rational, $0 < \alpha \leq 1$, but $\alpha D$ is an integer), then we go to the other side $S_j$, $j\neq i$, through $uv$ and explore side $S_j$ from $v$ towards $s$ until we:
	\begin{itemize}
	    \item either ``see'' a vertex $y$ already visited after distance $\beta D$ (we fix $\beta D \le \alpha D - 1$, so $0 \leq \beta < \alpha$),
	    \item or explore distance $\alpha D - 1$ and do not see any already visited vertex,
	    \item or are completely blocked on $S_j$ before we reach distance $\alpha D - 1$. 
	\end{itemize} 
	By ``see'', we mean that we can reach - or not - a neighbor of $y$ which reveals the status of the edge between them: in this way, we actually know the distance to reach $y$ from $v$ even if we did not visit $v$. \Cref{fig: open chord between A and B} describes this rule with an example.
	
	The role of this procedure is to know which endpoint of the chord is the closest to $s$. If we see, after distance $\beta D = \alpha D - 1$, an already visited vertex (denoted by $y$ in \Cref{fig: open chord between A and B}) at distance $\alpha D$ from $v$, then, we continue the exponential balancing: we go back to $v$ and thus to state $\mathbf{A}$ in the automaton, update the budget value $D$ which becomes $D+\alpha D$.
	
	Otherwise, we update $G$ by keeping only the target component of separator $\{u,v\}$. The current graph becomes $G' = G[R(t,\{u,v\})]$. If we saw an already visited vertex $y \in S_j$ by exploring distance $\beta D < \alpha D - 1$, then the new source becomes $s' = v$. Otherwise, the new source is $s' = u$. We concatenate the current walk with the walk returned by applying \expo on input $(G',s',t)$.
	




\begin{figure}[t]
\centering
\begin{subfigure}{0.47\textwidth}
\centering

\scalebox{1}{
\begin{tikzpicture}
			\node[noeud,scale=.75] (s) at (0,0) {};
			\draw (s) node[left,scale=1.5] {$s$};
			\node[noeud,scale=.75] (t) at (4,0) {};
			\draw (t) node[right,scale=1.5] {$t$};
			\draw[bend left,opacity=0.5,postaction={-,draw,opacity=1,line width=0.25mm,dash pattern=on 55pt off 10cm}] (s)to node[pos=0.2,above,opacity=1]{$D$} node[noeud,pos=0.4,opacity=1,scale=.75](x){} node[noeud,pos=0.5,opacity=1, scale=.5](b1){} node[pos=0.55,below,opacity=1]{$e$} node[noeud,pos=0.6,opacity=1, scale=.5](b2){}   node[noeud,pos=0.75,opacity=1,scale=.75](v){}(t);
			\draw[bend right,opacity=0.5,postaction={-,draw,opacity=1,line width=0.25mm,dash pattern=on 60pt off 10cm}] (s)to node[pos=0.2,below,opacity=1]{$D$} node[noeud,pos=0.4,opacity=1,scale=.75](y){} node[noeud,pos=0.55,opacity=1,scale=.75](u){}  (t);
			
			\draw[line width=0.25mm] (u) circle (0.25);
			
			\draw[line width=0.25mm] (u)to (v);
			
			\draw[line width=0.5mm, color=red] (b1)to (b2);
			
			\draw (x) node[above=5 pt, scale=1] {$x$};

			\draw (u) node[below=6 pt, scale=1] {$u$};
			
			\draw (y) node[below=5 pt, scale=1] {$y$};
			\draw (v) node[below=5 pt, scale=1] {$v$};
			
		\end{tikzpicture}
		}
\caption{Step \ref{step: blocked edge} : blocked edge $e$ between states $\mathbf{A}$ and $\mathbf{B}$.}
\label{fig: blocked edge between A and B}
\end{subfigure} \hspace{0.04\textwidth}
\begin{subfigure}{0.47\textwidth}

\centering

\scalebox{1}{
\begin{tikzpicture}
			\node[noeud,scale=.75] (s) at (0,0) {};
			\draw (s) node[left,scale=1.5] {$s$};
			\node[noeud,scale=.75] (t) at (4,0) {};
			\draw (t) node[right,scale=1.5] {$t$};
			\draw[bend left,opacity=0.5,postaction={-,draw,opacity=1,line width=0.25mm,dash pattern=on 95pt off 10cm}] (s)to node[pos=0.2,above,opacity=1]{$D$} node[noeud,pos=0.4,opacity=1,scale=.75](x){} node[pos=0.55,below,opacity=1]{$D$} node[noeud,pos=0.7,opacity=1,scale=.75](xx){} node[noeud,pos=0.85,opacity=1,scale=.75](u){} (t);
			\draw[bend right,opacity=0.5,postaction={-,draw,opacity=1,line width=0.25mm,dash pattern=on 55pt off 10cm}] (s)to node[pos=0.2,below,opacity=1]{$D$} node[noeud,pos=0.4,opacity=1,scale=.75](y){}  node[noeud,pos=0.5,opacity=1, scale=.5](b1){} node[pos=0.55,below,opacity=1]{$e$} node[noeud,pos=0.6,opacity=1, scale=.5](b2){} node[noeud,pos=0.75,opacity=1,scale=.75](v){}(t);
			
			\draw[line width=0.25mm] (u) circle (0.25);
			
			\draw[line width=0.25mm] (u)to (v);
			
			\draw[line width=0.5mm, color=red] (b1)to (b2);
			
			\draw (x) node[above=5 pt, scale=1] {$x$};
			
			\draw (xx) node[above=5 pt, scale=1] {$x'$};
			
			\draw (u) node[above=6 pt, scale=1] {$u$};
			
			\draw (y) node[below=5 pt, scale=1] {$y$};
			\draw (v) node[below=5 pt, scale=1] {$v$};
			
		\end{tikzpicture}
		}
\caption{Step \ref{step: blocked edge} : blocked edge $e$ between states $\mathbf{C}$ and $\mathbf{A}$.}
\label{fig: blocked edge between C and A}
\end{subfigure}
\begin{subfigure}{0.47\textwidth}
\centering

\scalebox{1}{
\begin{tikzpicture}
			\node[noeud,scale=.75] (s) at (0,0) {};
			\draw (s) node[left,scale=1.5] {$s$};
			\node[noeud,scale=.75] (t) at (4,0) {};
			\draw (t) node[right,scale=1.5] {$t$};
			\draw[bend left,opacity=0.5,postaction={-,draw,opacity=1,line width=0.25mm,dash pattern=on 75pt off 10cm}] (s)to node[pos=0.2,above,opacity=1]{$D$} node[noeud,pos=0.4,opacity=1,scale=.75](x){} node[pos=0.55,below,opacity=1]{$\alpha D$}  node[noeud,pos=0.7,opacity=1,scale=.75](u){}(t);
			\draw[bend right,opacity=0.5,postaction={-,draw,opacity=1,line width=0.25mm,dash pattern=on 45pt off 10cm}] (s)to node[pos=0.2,below,opacity=1]{$D$} node[noeud,pos=0.4,opacity=1,scale=.75](y){} node[pos=0.55,below,opacity=1]{$?$} node[noeud,pos=0.7,opacity=1,scale=.75](v){}  (t);
			
			\draw[line width=0.25mm] (u) circle (0.25);
			
			\draw[line width=0.25mm] (u)to (v);

			\draw (x) node[above=5 pt, scale=1] {$x$};

			\draw (u) node[above=6 pt, scale=1] {$u$};
			
			\draw (y) node[below=5 pt, scale=1] {$y$};
			\draw (v) node[below=5 pt, scale=1] {$v$};
			
		\end{tikzpicture}
		}
\caption{Step \ref{step: open chord between A and B} : open vertical chord $uv$ between states $\mathbf{A}$ and $\mathbf{B}$.}
\label{fig: open chord between A and B}
\end{subfigure} \hspace{0.04\textwidth}
\begin{subfigure}{0.47\textwidth}

\centering

\scalebox{1}{
\begin{tikzpicture}
			\node[noeud,scale=.75] (s) at (0,0) {};
			\draw (s) node[left,scale=1.5] {$s$};
			\node[noeud,scale=.75] (t) at (4,0) {};
			\draw (t) node[right,scale=1.5] {$t$};
			\draw[bend left,opacity=0.5,postaction={-,draw,opacity=1,line width=0.25mm,dash pattern=on 75pt off 10cm}] (s)to node[pos=0.2,above,opacity=1]{$D$} node[noeud,pos=0.4,opacity=1,scale=.75](x){} node[pos=0.55,below,opacity=1]{$D$} node[noeud,pos=0.7,opacity=1,scale=.75](xx){} node[noeud,pos=0.85,opacity=1,scale=.75](v){} (t);
			\draw[bend right,opacity=0.5,postaction={-,draw,opacity=1,line width=0.25mm,dash pattern=on 70pt off 10cm}] (s)to node[pos=0.2,below,opacity=1]{$D$} node[noeud,pos=0.4,opacity=1,scale=.75](y){}   node[pos=0.52,above,opacity=1]{$\alpha D$} node[noeud,pos=0.65,opacity=1,scale=.75](u){}(t);
			
			\draw[line width=0.25mm] (u) circle (0.25);
			
			\draw[line width=0.25mm] (u)to (v);

			\draw (x) node[above=5 pt, scale=1] {$x$};
			
			\draw (xx) node[above=5 pt, scale=1] {$x'$};
			
			\draw (u) node[below=6 pt, scale=1] {$u$};
			
			\draw (y) node[below=5 pt, scale=1] {$y$};
			\draw (v) node[above=5 pt, scale=1] {$v$};
			
		\end{tikzpicture}
		}
\caption{Step \ref{step: open chord between C and A} : open vertical chord $uv$ between states $\mathbf{C}$ and $\mathbf{A}$.}
\label{fig: open chord between C and A}
\end{subfigure}

\caption{Four situations potentially met with \expo on some unit-weighted outerplanar graph. The circled vertex is the current position.}
\label{fig:strategy}
\end{figure}
	
	\item \label{step: open chord between C and A} \textbf{Handling open \vcordes between states $\mathbf{C}$ and $\mathbf{A}$}. If, during the transition between states $\mathbf{C}$ and $\mathbf{A}$ (when some attempt is launched on the side of $y$ and the traversed distance on the other side is larger, see \Cref{fig-automateDeBeaudou}), an open \vcorde $uv$ is revealed (see \Cref{fig: open chord between C and A}), then we keep only the target component of $\{u,v\}$ and set $u$ as the new source. More formally, we concatenate the current walk with the walk returned by applying \expo on input $(G',u,t)$, where $G' = G[R_G(t,\{u,v\})]$.
\end{enumerate}

Steps~\ref{step: blocked edge}-\ref{step: open chord between C and A} can be summarized in this way: when we reveal an open \vcorde $uv$ such that $d_{E_*}(s,v) = d_{E_*}(s,u) + 1$, we launch a recursive call on the target component of separator $\{u,v\}$ with source $u$ and target $t$. Indeed, any optimal offline path must pass through separator $\{u,v\}$ and, as $d_{E_*}(s,v) = d_{E_*}(s,u) + 1$, we can say there is one optimal offline path $\popt$ such that $u \in \popt$. Hence, it makes sense to select $u$ as a new source, there is no interest in visiting 
vertices different from $\{u,v\}$ belonging to their source component.

\subsection{Competitive analysis} \label{subsec-competitive}

We now show that the strategy \expo presented above has competitive ratio $9$ on unit-weighted outerplanar graphs. We prove this statement by minimal counterexample. In this subsection, let $G$ denote the smallest (by number of vertices, then number of edges) unit-weighted outerplanar graph on which \expo does not achieve competitive ratio~9. We will see that the existence of such a graph $G$ necessarily implies a contradiction.

Examples of executions of \expo are given in \Cref{fig-unweightedOuterplanarGraphs-2-firstGraph,fig-unweightedOuterplanarGraphs-3-secondGraph}.

\begin{figure}[h]
	\centering
	\scalebox{0.8}{\begin{tikzpicture}
	\node (g1) at (0,0) {
		\begin{tikzpicture}
			\node[noeud] (s1) at (0,1) {};
			\node[noeud] (u1t) at (1,2) {};
			\node[noeud] (u2t) at (2,2) {};
			\node[noeud] (u1b) at (1,0) {};
			\node[noeud] (u2b) at (2,0) {};
			\node[noeud] (t1) at (3,1) {};
			
			\draw[line width = 0.75mm, color = white!60!blue] \convexpath{s1,u1t,u2t,t1,u2t,u1t,s1}{0.31cm};
			\fill[white] \convexpath{s1,u1t,u2t,t1,u2t,u1t,s1}{0.30cm};
			\draw[line width = 0.75mm, color = white!30!malachite] \convexpath{s1,u1b,u2b,t1,u2b,u1b,s1}{0.23cm};
			\fill[white] \convexpath{s1,u1b,u2b,t1,u2b,u1b,s1}{0.22cm};
			
			\node[noeud] (s1) at (0,1) {};
			\node[noeud] (u1t) at (1,2) {};
			\node[noeud] (u2t) at (2,2) {};
			\node[noeud] (u1b) at (1,0) {};
			\node[noeud] (u2b) at (2,0) {};
			\node[noeud] (t1) at (3,1) {};
			
			\draw[line width = 1.1pt] (s1)to(u1t);
			\draw[line width = 1.1pt] (s1)to(u1b);
			\draw[line width = 1.1pt] (u1t)to(u2t);
			\draw[line width = 1.1pt] (u1b)to(u2b);
			\draw[line width = 1.1pt] (u2t)to(t1);
			\draw[line width = 1.1pt] (u2b)to(t1);
			\draw[line width = 1.1pt] (u1t)to(t1);
			\draw[line width = 1.1pt] (u1b)to(t1);
			
			\draw (s1) node[left,scale=2,xshift=-0.75mm] {$s$};
			\draw (t1) node[right,scale=2,xshift=0.75mm] {$t$};
			\draw (1.5,2.75) node[scale=1.5,color = blue] {\haut $S_1$};
			\draw (1.5,-0.75) node[scale=1.5, color = malachite] {\bas $S_2$};
		\end{tikzpicture}
	};
	
	\node (g2) at (12,0) {
		\begin{tikzpicture}
			\node[noeud] (s1) at (0,1) {};
			\node[noeud] (u1t) at (1,2) {};
			\node[noeud] (u2t) at (2,2) {};
			\node[noeud] (u1b) at (1,0) {};
			\node[noeud] (u2b) at (2,0) {};
			\node[noeud] (t1) at (3,1) {};
			
			\draw[free] (s1)to(u1t);
			\draw[free] (s1)to(u1b);
			\draw[free] (u1t)to(u2t);
			\draw (u1b)to(u2b);
			\draw (u2t)to(t1);
			\draw (u2b)to(t1);
			\draw[blocked] (u1t)to(t1);
			\draw (u1b)to(t1);
			
			\draw (s1) node[left,scale=2,xshift=-0.75mm] {$s$};
			\draw (t1) node[right,scale=2,xshift=0.75mm] {$t$};
			\draw[line width=0.75mm] (u1t) circle (0.25);
		\end{tikzpicture}
	};
	
	\node (g3) at (0,-6) {
		\begin{tikzpicture}
			\node[noeud] (s1) at (0,1) {};
			\node[noeud] (u1t) at (1,2) {};
			\node[noeud] (u2t) at (2,2) {};
			\node[noeud] (u1b) at (1,0) {};
			\node[noeud] (u2b) at (2,0) {};
			\node[cross out,draw=red,line width=0.75mm] (u2b) at (2,0) {};
			\node[noeud] (t1) at (3,1) {};
			
			\draw[free] (s1)to(u1t);
			\draw[free] (s1)to(u1b);
			\draw[free] (u1t)to(u2t);
			\draw[free] (u1b)to(u2b);
			\draw (u2t)to(t1);
			\draw (u2b)to(t1);
			\draw[blocked] (u1t)to(t1);
			\draw[free] (u1b)to(t1);
			
			\draw (s1) node[left,scale=2,xshift=-0.75mm] {$s$};
			\draw (t1) node[right,scale=2,xshift=0.75mm] {$t$};
			\draw[line width=0.75mm] (u1b) circle (0.25);
		\end{tikzpicture}
	};
	
	\node (g4) at (12,-6) {
		\begin{tikzpicture}
			\node[noeud] (s1) at (0,1) {};
			\node[noeud] (u1t) at (1,2) {};
			\node[noeud] (u2t) at (2,2) {};
			\node[noeud] (u1b) at (1,0) {};
			\node[noeud] (u2b) at (2,0) {};
			\node[cross out,draw=red,line width=0.75mm] (u2b) at (2,0) {};
			\node[noeud] (t1) at (3,1) {};
			
			\draw[free] (s1)to(u1t);
			\draw[free] (s1)to(u1b);
			\draw[free] (u1t)to(u2t);
			\draw[free] (u1b)to(u2b);
			\draw[free] (u2t)to(t1);
			\draw[free] (u2b)to(t1);
			\draw[blocked] (u1t)to(t1);
			\draw[free] (u1b)to(t1);
			
			\draw (s1) node[left,scale=2,xshift=-0.75mm] {$s$};
			\draw (t1) node[right,scale=2,xshift=0.75mm] {$t$};
			\draw[line width=0.75mm] (t1) circle (0.25);
		\end{tikzpicture}
	};
	
	\draw[->,line width = 0.7pt, double] (g1)to node[midway,text centered,text width=65mm,yshift=13mm]{Exploring the \haut with budget~1, gaining information on open and blocked edges. The red \hcorde is blocked, preventing us from reaching $t$.} (g2);
	\draw[->,line width = 0.7pt, double] (g2)to node[midway,sloped,text centered,text width=65mm,yshift=10mm]{Exploring the \bas with budget~2, we reveal an open \hcorde. We remove the now useless vertices (Step~\ref{step: reach horizontal chord}).} (g3);
	\draw[->,line width = 0.7pt, double] (g3)to node[midway,sloped,text centered,text width=65mm,yshift=8mm]{Target reached, with $T=4$ while the optimal offline path has length~2.} (g4);
\end{tikzpicture}}
	\caption{Application of \expo on the first graph of the decomposition of \Cref{fig-unweightedOuterplanarGraphs-1-decomposition}. At each step, the circled vertex is the one we are currently exploring, and we know the status of the bold edges: black is open, red is blocked.}
	\label{fig-unweightedOuterplanarGraphs-2-firstGraph}
\end{figure}

The two following technical lemmas prove that a recursive call has to happen when \expo is applied on $G$ (\Cref{lem-recursive}) and that such a recursive call implies certain properties (\Cref{lem-vcorde}).

\begin{lemma}
 During the execution of \expo, let $T$ be the  distance travelled at a given point before the first recursive call (if any). Then, $T\leq 9\dopte$. Moreover, if we are in state $\mathbf{A}$, let $x$ and $y$ be the last two vertices explored on each side during the exponential balancing. Then:
		$(i)$ $d_{E_*}\left(s,x\right)=D$, $(ii)$ $d_{E_*}\left(s,y\right)=D$ and $(iii)$ $T \leq 5D$.
  
\label{lem-recursive}
\end{lemma}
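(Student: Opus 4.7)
The plan is to establish invariants (i), (ii), (iii) and the global bound $T\leq 9\dopt$ by an induction on the execution of \expo, combined with a case analysis of the current configuration in the automaton of \Cref{fig-automateDeBeaudou}.

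I first verify invariants (i) and (ii). When in state $\mathbf{A}$, the last explored vertices $x$ and $y$ on each side satisfy $d_{E_*}(s,x)=d_{E_*}(s,y)=D$ thanks to the combination of unit weights and the horizontal chord rule (Step~\ref{step: reach horizontal chord}). Whenever we reveal an open \hcorde at our current vertex, we shortcut past the surrounded vertices; by planarity and unit weights, the sub-walk obtained on side $S_i$ coincides with a shortest $s$-to-current-vertex path in $G\setminus E_*'$ restricted to $\{s\}\cup S_i\cup\{t\}$. Hence after $D$ exploration steps on side $S_i$ we are exactly at distance $D$ from $s$.

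Next, I prove invariant (iii) by induction on the number of cycles through state $\mathbf{A}$. The initial traversal $\mathbf{E_1}\to\mathbf{E_2}\to\mathbf{A}$ contributes $3$. Each subsequent cycle $\mathbf{A}\to\mathbf{B}\to\mathbf{C}\to\mathbf{A}$ with current parameter $D$ contributes $D$ (continuing on the current side) $+\,2D$ (backtracking to $s$) $+\,D$ (reaching $y$ on the other side) $+\,D$ (continuing on the other side) $=5D$, after which $D$ doubles. Summing the resulting geometric series yields $T\leq 3+5(1+2+\cdots+D/2)=5D-2\leq 5D$.

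For the global bound $T\leq 9\dopt$ at any point before the first recursive call, I case-analyze the current configuration. The key observation is that if no recursive call has been triggered yet, no open \vcorde has been exploited: for any \vcorde $p_aq_b$ used by $\popt$, neither $p_a$ nor $q_b$ has been visited (else the chord would have been revealed and processed by Step~\ref{step: open chord between A and B} or~\ref{step: open chord between C and A}), so $a,b>D$, yielding $\dopt\geq 2D+2$. Combined with the single-side lower bound $\dopt\geq D+1$ (since $D$ exploration steps on either side did not reach $t$), we obtain $\dopt\geq D+1$ in all configurations. Since $T$ is at most $5D$ in state $\mathbf{A}$, at most $6D-2$ in state $\mathbf{B}$, and at most $9D-2$ in state $\mathbf{C}$, we always conclude $T<9\dopt$. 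The most delicate subcase, and the main obstacle of the proof, is the partial execution of Step~\ref{step: blocked edge}: after being blocked on one side, we backtrack to $s$ and perform a budget-free walk on the opposite side until reaching either $t$ or an open \vcorde . Bounding $T$ here requires exploiting that $\popt$ must route around the blocked side, so the extra distance walked on the opposite side is itself absorbed by $\dopt$; partial executions of Steps~\ref{step: open chord between A and B} and~\ref{step: open chord between C and A} add only terms of order $\alpha D$ with $\alpha\leq 1$, which stay within the $9\dopt$ envelope.
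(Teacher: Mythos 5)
Your overall plan coincides with the paper's: treat $(i)$--$(iii)$ as invariants of the automaton of \Cref{fig-automateDeBeaudou}, prove $(iii)$ by induction on the passes through state $\mathbf{A}$, and verify $T\le 9\dopt$ along the way; your per-cycle cost $D+2D+D+D=5D$ and the sum $3+5(1+2+\cdots+D/2)=5D-2$ match the paper's computation exactly. There are, however, genuine gaps. (a) Your geometric series assumes the budget always doubles, but Step~\ref{step: open chord between A and B} can send the execution back to state $\mathbf{A}$ with the \emph{non-doubled} update $D'=D+\alpha D$ (the subcase where $y$ is seen after distance $\beta D=\alpha D-1$ and the balancing resumes). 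The paper checks this case separately: $T\le 5D+\alpha D+1+2\alpha D\le 5(D+\alpha D)=5D'$, using $\alpha D\ge 1$. Without it, invariant $(iii)$ is not established at every re-entry into $\mathbf{A}$, and the induction breaks. (b) For $(i)$ and $(ii)$, comparing the walk to shortest paths restricted to $\{s\}\cup S_i\cup\{t\}$ does not bound $d_{E_*}(s,x)$, which is a distance in the whole graph: you must also rule out a shortcut crossing sides through a \vcorde. The paper's one-line argument is that such a chord would have a visited endpoint, hence be revealed, and a revealed open \vcorde yielding a shortcut triggers a recursive call, contrary to the hypothesis that none has occurred yet.

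(c) You verify $T\le 9\dopt$ only at the states $\mathbf{A}$, $\mathbf{B}$, $\mathbf{C}$, using $\dopt\ge D+1$. The lemma concerns an arbitrary point of the execution, and in the middle of the transition $\mathbf{C}\to\mathbf{A}$ one has $T$ up to $9D+\alpha D$, which exceeds $9(D+1)$ as soon as $\alpha D$ is large; the bound $\dopt\ge D+1$ is then insufficient. The paper uses the sharper bound $\dopt\ge D+\alpha D$ (both sides have by then been explored to depth at least $D+\alpha D$ without reaching $t$ and without revealing a usable open \vcorde), giving $T\le 9D+\alpha D\le 9(D+\alpha D)\le 9\dopt$. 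Your auxiliary claim $\dopt\ge 2D+2$ when $\popt$ uses a \vcorde is also not justified as stated (unvisited endpoints give a lower bound on the distance from $s$ to the chord, not on the remaining distance to $t$), though you do not rely on it. Finally, your closing paragraph on Step~\ref{step: blocked edge} is hand-waved (``absorbed by $\dopt$''); note that in the paper's decomposition this accounting is not part of the present lemma: the budget-free walk of Step~\ref{step: blocked edge} always ends at $t$ or in a recursive call, and its cost is charged in the proof of \Cref{thm-unweightedOuterplanar-ratio9} via \Cref{lem-vcorde}, starting from invariant $(iii)$. So you do not need to resolve that ``main obstacle'' here, but you do need (a)--(c).
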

\begin{proof}
Assume that we have applied \expo on $G$ until a certain point and that no recursive call was launched so far.
We first focus on the second part of the invariant we want to show :
	\begin{center}
		In state $\mathbf{A}$, $(i)$ $d_{E_*}\left(s,x\right)=D$, $(ii)$ $d_{E_*}\left(s,y\right)=D$ and $(iii)$ $T \leq 5D$.
	\end{center}
	Items $(i)$ and $(ii)$ are true, since no shortcut between $s$ and either $x$ or $y$ can exist: any open \hcorde is used, and an open \vcorde opening up a shortcut leads to a recursive call (Steps~\ref{step: open chord between A and B} and~\ref{step: open chord between C and A}).
	
	Item $(iii)$ is trivially true when we kick-start the exponential balancing: when entering $\mathbf{A}$ from $\mathbf{E_2}$, we have $T=3$ and $d_{E_*}\left(s,x\right)=d_{E_*}\left(s,y\right)=1$. Assume that it is true for a given $D \geq 1$, and let $T_0$ be the value of $T$ at this point. When we reach state $\mathbf{B}$, we have $T = T_0+D \leq 6D$. When we reach state $\mathbf{C}$, we have $T = T_0+D+3D \leq 9D$. In brief, from state $\mathbf{A}$ to $\mathbf{C}$, we have $\dopt \ge D$ as distance $D$ was explored on both sides without reaching $t$. The largest ratio of $T$ by $D$ on these phases is 9 at state $\mathbf{C}$, where we have $T \le 9\dopt$.
	
	During the transition from $\mathbf{C}$ to $\mathbf{A}$, if $D + \alpha D$ denotes the traversed distance on current side at any moment (see \Cref{fig-automateDeBeaudou}), then $\dopt \ge D + \alpha D$ and $T = 9D + \alpha D$. The ratio $\frac{T}{d_{\mbox{\tiny{opt}}}}$ admits a decreasing upper bound, from 9 in state  $\mathbf{C}$ to 5 in $\mathbf{A}$. Indeed, when we are back to state $\mathbf{A}$, we have $T = T_0+D+3D+D$, but the value of $D$ is updated. Let $D' = 2D$. We have $T = T_0+5D \leq 5D+5D = 5D'$, and so item $(iii)$ remains true during the core loop. 
	
	We also have to check that it is true when we met an open \vcorde $uv$ between states $\mathbf{A}$ and $\mathbf{B}$ which satisfies $d_{E_*}(s,v) = d_{E_*}(s,u)$ (case $\beta D  = \alpha D - 1$ in Step~\ref{step: open chord between A and B}). In this case, the new value of $D$ is $D'=D+\alpha D$ and we have $T \leq 5D + \alpha D + 1 + 2\alpha D \leq 5(D+\alpha D) = 5D'$ (since $\alpha D \geq 1$), so item $(iii)$ remains true.

Thus, conditions $(i)$-$(iii)$ hold in state $\mathbf{A}$, and we always (during all states and transitions between them) have $T\leq 9\dopt$, hence the statement holds. 

\end{proof}

\begin{figure}
	\centering
	\begin{bigcenter}
		\scalebox{0.75}{\input{tikzjournal/unweighted-second}}
	\end{bigcenter}
	\caption{Application of \expo on the third graph of the decomposition of \Cref{fig-unweightedOuterplanarGraphs-1-decomposition}. At each step, the circled vertex is the one we are currently exploring, and we know the status of the bold edges: black is open, red is blocked.}
	\label{fig-unweightedOuterplanarGraphs-3-secondGraph}
\end{figure}

\begin{lemma}
Assume that we are currently executing \expo on $G$ and that a recursive call is launched after revealing the \vcorde $uv$ with new source $u$. Let $T$ be the distance traversed before the recursive call. Then, either $T > 9d_{E_*}\left(s,u\right)$ or $d_{E_*}\left(s,v\right) < d_{E_*}\left(s,u\right)+1$.
\label{lem-vcorde}
\end{lemma}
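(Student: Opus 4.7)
The plan is to proceed by contradiction, leveraging the minimality of $G$. Specifically, I would suppose both conclusions of the lemma fail simultaneously, i.e., $T \leq 9 d_{E_*}(s,u)$ and $d_{E_*}(s,v) \geq d_{E_*}(s,u) + 1$, and show that this forces the total cost of \expo on $(G,E_*)$ to be at most $9\dopt$, contradicting the fact that $G$ is a counterexample.

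The first step is to pin down $\dopt$ under the standing assumption $d_{E_*}(s,v) \geq d_{E_*}(s,u) + 1$. Since $uv$ is a vertical chord, $\{u,v\}$ forms an $(s,t)$-separator of $G$, so every $(s,t)$-path in $G \setminus E_*$ passes through $u$ or $v$, giving
\[
    \dopt = \min\bigl(d_{E_*}(s,u) + d_{E_*}(u,t),\ d_{E_*}(s,v) + d_{E_*}(v,t)\bigr).
\]
Because $uv$ is open and of unit weight, the triangle inequality yields $d_{E_*}(v,t) \geq d_{E_*}(u,t) - 1$; combined with the standing assumption, this shows the branch through $v$ is at least as expensive as the branch through $u$, so $\dopt = d_{E_*}(s,u) + d_{E_*}(u,t)$.

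Next, I would apply minimality to the subgraph $G' = G[R_G(t,\{u,v\})]$ on which the algorithm recurses. Since $s$ lies in the source component and $s \notin \{u,v\}$, vertex $s$ does not belong to $G'$, so $G'$ has strictly fewer vertices than $G$; as an induced subgraph of an outerplanar graph, $G'$ is itself outerplanar. Hence $G'$ cannot be a counterexample, and \expo on $(G',u,t)$ with the restricted blockages $E_* \cap E(G')$ achieves competitive ratio at most $9$. Moreover, since $\{u,v\}$ is an $(s,t)$-separator of $G$, any shortest $(u,t)$-path in $G$ stays in $G'$ (a detour through the source component would have to enter and leave through $\{u,v\}$ and can always be short-cut via the open edge $uv$), so $d_{E_*}(u,t; G') = d_{E_*}(u,t)$, and the cost of the recursive call is at most $9 d_{E_*}(u,t)$.

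Concatenating the two parts of the walk, the total cost of \expo on $(G,E_*)$ is at most $T + 9 d_{E_*}(u,t) \leq 9 d_{E_*}(s,u) + 9 d_{E_*}(u,t) = 9\dopt$, which is the sought contradiction. The main subtlety, such as it is, lies in the first step, namely identifying $\dopt$ precisely using the separator property together with the triangle inequality on the open chord $uv$; the remaining steps reduce to a direct appeal to the minimality hypothesis and routine arithmetic.
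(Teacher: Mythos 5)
Your proposal is correct and follows essentially the same route as the paper: assume both conclusions fail, invoke minimality of $G$ to bound the recursive call by $9d_{E_*}(u,t)$, use the separator $\{u,v\}$ together with the unit-weight open chord to show $u$ lies on an optimal offline path, and conclude $T+T'\leq 9\dopt$. Your treatment is in fact slightly more explicit than the paper's on two minor points (the triangle-inequality argument pinning down $\dopt$, and the equality of $d_{E_*}(u,t)$ in $G$ and in the target component), both of which the paper leaves implicit.
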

\begin{proof}
	If $d_{E_*}\left(s,v\right) \ge d_{E_*}\left(s,u\right)+1$, following the rules established in Steps~\ref{step: blocked edge}-\ref{step: open chord between C and A}, we will launch a recursive call on the target component of $\{u,v\}$ with new source $u$. Hence, we will have $\dtr{\textsc{exp}}\left(G,E_*\right)= T+T'$, where $T' \leq 9d_{E_*}\left(u,t\right)$ by minimality of $G$ and \textsc{exp} abbreviates \expo. By way of contradiction, suppose that $T \leq 9d_{E_*}\left(s,u\right)$. The optimal offline path $\popt$ necessarily goes through the separator $\{u,v\}$ in graph $G$ and, since $d_{E_*}\left(s,v\right)=d_{E_*}\left(s,u\right)+1$, $u$ belongs to some optimal offline path. Consequently, $T+T' \leq 9(d_{E_*}\left(s,u\right)+d_{E_*}\left(u,t\right)) = 9d_{E_*}\left(s,t\right)$.
	\end{proof}

We are now ready to prove the major contribution of this article.
	
\unweightedOuterplanarRatio*
\begin{proof}
    A direct consequence of \Cref{lem-recursive} is that, during some attempt, \expo will launch a recursive call on $G$ (otherwise, it has competitive ratio~9, a contradiction).
	Let $T$ be the distance traversed before the recursive call.
	\Cref{lem-vcorde} has an important consequence: if we launch a recursive call on the open \vcorde $uv$ with new source $u$ and can guarantee that both $d_{E_*}\left(s,v\right)=d_{E_*}\left(s,u\right)+1$ and $T \leq 9d_{E_*}\left(s,u\right)$, then, we have a contradiction. According to the description of \expo, a recursive call is launched when we are sure that $d_{E_*}\left(s,v\right)=d_{E_*}\left(s,u\right)+1$: this concerns Step~\ref{step: blocked edge}, Step~\ref{step: open chord between A and B} when $\beta D < \alpha D - 1$ and Step~\ref{step: open chord between C and A}.

	Assume first that we are blocked on one side between states $\mathbf{A}$ and $\mathbf{B}$ in Step \ref{step: blocked edge} (see \Cref{fig: blocked edge between A and B}). We know that $d_{E_*}\left(s,v\right)=d_{E_*}\left(s,u\right)+1$ because $u$ is an articulation point of $G\setminus E_*'$. Also, $d_{E_*}\left(s,u\right)=D+d_{E_*}\left(y,u\right)$. Using \Cref{lem-recursive}:
	$$
	\begin{array}{rclr}
		T & \leq & (5D + \alpha D) + (\alpha D + 2D) + d_{E_*}\left(y,u\right) & \\
		& \leq & (7 + 2\alpha)D + d_{E_*}\left(y,u\right) & \\
		& \leq & 9(D + d_{E_*}\left(y,u\right)) & (\alpha \leq 1) \\
		& \leq & 9d_{E_*}\left(s,u\right) &
	\end{array}
	$$
	which, by \Cref{lem-vcorde}, leads to a contradiction.

	Assume now that we are blocked on one side between states $\mathbf{C}$ and $\mathbf{A}$ in Step~\ref{step: blocked edge} (see \Cref{fig: blocked edge between C and A}). Let $x'$ be the last vertex reached at the end of state $\mathbf{A}$, we know that $d_{E_*}\left(s,v\right)=d_{E_*}\left(s,u\right)+1$ because $u$ is an articulation point of $G\backslash E_*'$ and $d_{E_*}\left(s,u\right)=2D+d_{E_*}\left(x',u\right)$. Using \Cref{lem-recursive}:
	$$
	\begin{array}{rclr}
		T & \leq & (9D + \alpha D) + (\alpha D + 3D) + d_{E_*}\left(x',u\right) & \\
		& \leq & (12 + 2\alpha)D + d_{E_*}\left(x',u\right) & \\
		& \leq & 9(2D + d_{E_*}\left(x',u\right)) \leq 9d_{E_*}\left(s,u\right) & ~~~(\alpha \leq 1) \\
	\end{array}
	$$
	which, by \Cref{lem-vcorde}, leads to a contradiction.
	
	Assume now that we reveal an open \vcorde $uv$ between states $\mathbf{A}$ and $\mathbf{B}$ in Step~\ref{step: open chord between A and B} (see \Cref{fig: open chord between A and B}). Recall that $d_{E_*}\left(s,u\right) \leq D+\alpha D$, and we explore up to distance $\alpha D - 1$ towards $y$. There are two possibilities: either we see $y$ by exploring distance $\beta D$ (with $\beta D < \alpha D - 1$), or we do not see $y$ even if we explore distance $\alpha D - 1$.
	
	If we see $y$, then, we know that $d_{E_*}\left(s,u\right)=d_{E_*}\left(s,v\right)+1$ since going to $u$ through $x$ will yield distance $D+\alpha D$ while going through $y$ and $v$ will yield distance at most $D + \beta D + 2$, and we know that $\beta D < \alpha D - 1$ and $\beta D \geq 0$. So, $d_{E_*}\left(s,v\right) = D + \beta D + 1$. Using \Cref{lem-recursive}:
	$$
	\begin{array}{rclr}
		T & \leq & (5D + \alpha D) + (1 + 2(\beta D + 1)) & \\
		& \leq & (5 + \alpha + 2\beta)D+3 & \\
		& \leq & 9(D + \beta D + 1) \leq 9d_{E_*}\left(s,v\right) & ~~~(\beta < \alpha \leq 1) \\
	\end{array}
	$$
	which, by \Cref{lem-vcorde} leads to a contradiction (the roles of $u$ and $v$ are reversed here, since $v$ is the new source).
	
	If we do not reach $y$, either by blocked edges or because we have explored distance $\alpha D - 1$ without reaching it, then, we know that $d_{E_*}\left(s,v\right)=d_{E_*}\left(s,u\right)+1$. Using \Cref{lem-recursive}:
	$$
	\begin{array}{rclr}
		T & \leq & (5D + \alpha D) + (1 + 2(\alpha D - 1) + 1) & \\
		& \leq & (5 + 3\alpha)D & \\
		& \leq & 9(D + \alpha D) \leq 9d_{E_*}\left(s,u\right) & ~~~(\alpha \leq 1) \\
	\end{array}
	$$
	which, by \Cref{lem-vcorde}, leads to a contradiction.
	
	Finally, assume that we reveal an open \vcorde $uv$ between states $\mathbf{C}$ and $\mathbf{A}$ after having explored $\alpha D$ vertices in Step~\ref{step: open chord between C and A} (see \Cref{fig: open chord between C and A}). Since $uv$ was not revealed before, this implies that the shortest path from $s$ to $v$ goes through $u$, and so $d_{E_*}\left(s,v\right)=d_{E_*}\left(s,u\right)+1$. Using \Cref{lem-recursive}:
	$$
	\begin{array}{rclr}
		T & \leq & 9D + \alpha D & \\
		& \leq & 9(D + \alpha D) \leq  9d_{E_*}\left(s,u\right) & ~~~(\alpha \leq 1) \\
	\end{array}
	$$
	which, by \Cref{lem-vcorde}, leads to a contradiction.
	
	All the possible cases lead to contradictions, and so such a $G$ cannot exist. \expo thus achieves competitive ratio $9$ on unit-weighted outerplanar graphs.
    \end{proof}

The strategy \expo\ can also be applied on outerplanar graphs where the stretch $S$ is bounded.

\begin{corollary}
There is a strategy with competitive ratio $9S$ on outerplanar graphs of stretch $S$.
\label{co-stretch}
\end{corollary}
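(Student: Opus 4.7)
The plan is to run \expo\ on the input outerplanar graph as if it were unit-weighted, that is, making all exploration decisions (budgets, exponential doubling, chord treatments, recursive calls) by counting edges rather than accumulated weight. Let $w_{\min}$ and $w_{\max}$ denote the minimum and maximum edge-weights of $G$, so $w_{\max}/w_{\min} \le S$. Let $G_u$ be the graph obtained from $G$ by replacing every weight with $1$. Note that $G_u$ is unit-weighted outerplanar and the set of blockages $E_*$ is identical, so the walk produced by \expo\ on $(G, E_*)$ when viewed as a walk on $G_u$ is exactly the walk \expo\ would produce on $(G_u, E_*)$.

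By \Cref{thm-unweightedOuterplanar-ratio9} applied to $(G_u, E_*)$, the number of edges traversed by \expo\ is at most $9\,\ell^*$, where $\ell^*$ is the number of edges of a shortest $(s,t)$-path in $G_u \setminus E_*$, i.e.\ an $(s,t)$-path in $G\setminus E_*$ with minimum number of edges. The actual weighted cost of the produced walk is therefore at most $9\,\ell^* \cdot w_{\max}$.

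On the other hand, any $(s,t)$-path in $G\setminus E_*$ must use at least $\ell^*$ edges (by definition of $\ell^*$), hence its weighted length is at least $\ell^* \cdot w_{\min}$. In particular, the optimal offline cost satisfies $\dopt \ge \ell^* \cdot w_{\min}$. Combining the two estimates,
\[
\frac{\dtr{\textsc{exp}}(G,E_*)}{\dopt} \;\le\; \frac{9\,\ell^*\,w_{\max}}{\ell^*\,w_{\min}} \;=\; 9\,\frac{w_{\max}}{w_{\min}} \;\le\; 9S,
\]
which yields the claimed competitive ratio.

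I do not expect any real obstacle here: the only subtlety is that \expo\ is defined in terms of edge counts rather than weighted distances, so it is safe to run it verbatim on a weighted graph without modification. The lower bound $\dopt \ge \ell^* w_{\min}$ uses that every unblocked $(s,t)$-path has at least $\ell^*$ edges, each of weight at least $w_{\min}$, and the upper bound on the walk length uses $w_{\max}$ per edge together with \Cref{thm-unweightedOuterplanar-ratio9}.
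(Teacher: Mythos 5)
Your proof is correct and follows essentially the same route as the paper: run \expo\ ignoring the weights, bound the number of traversed edges by $9$ times the hop-length of an optimal path via \Cref{thm-unweightedOuterplanar-ratio9}, and then convert to weighted cost using $w_{\max}$ for the upper bound and $w_{\min}$ for the lower bound on $\dopt$. The only cosmetic difference is that the paper measures against the edge count of the weighted-optimal path rather than the hop-shortest path, which changes nothing.
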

\begin{proof}
Apply strategy \expo\ as if the graph was unit-weighted. Let $\alpha$ be the minimum weight of the input graph and $W_{\mbox{\scriptsize{opt}}}$ be the number of edges of the optimal offline path. The total distance traversed is upper-bounded by $9S\alpha W_{\mbox{\scriptsize{opt}}}$ while $\dopt \ge \alpha W_{\mbox{\scriptsize{opt}}}$.
\end{proof}

\subsection{Lower bound 9 for unit-weighted outerplanar graphs}
\label{subsec-lowerBound}

In this subsection we prove that the competitive ratio achieved with the \expo strategy is optimal on unit-weighted outerplanar graphs.

 \label{subsec-lowunit}

This result can be obtained by a natural reduction from the linear search problem \cite{Be63} (or, equivalently, the cow-path problem on two rays \cite{KaReTa96}). The \emph{linear search problem} is defined as follows:
an immobile hider is located on the real line. A searcher starts from the origin and wishes to discover the hider in minimal time. The searcher cannot see the hider until he actually reaches the point at which the hider is located and the time elapsed until this moment is the duration of the game.

This problem reduces to the \kctp on specific road maps that we call \emph{shell road maps}. The {\em shell graph} on $2n$ vertices, denoted by $\sh_n$ (see \Cref{fig:dailly_graph}), is the graph obtained from a cycle on $2n$ vertices $\{v_0,v_1,\ldots,v_{2n-1}\}$ with all possible chords incident with vertex $v_n$, except $v_0v_n$. It is clearly outerplanar, and all edge weights are set to 1. In our setting, we shall consider $v_0$ as the source $s$ and $v_n$ as the target $t$.  We call \emph{shell road maps} the specific road maps $(\sh_n,E_*)$ where $E_*$ is made up only of edges incident with $t$. Said differently, the traveller cannot be blocked on the outer face on some edge $v_iv_{i+1}$.  

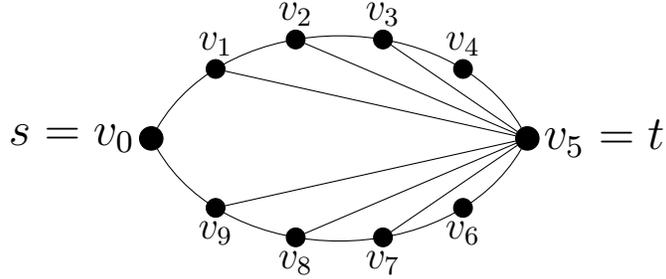
\begin{figure}[h]
    \centering
    \scalebox{1}{\begin{tikzpicture}
    \node[noeud] (s) at (0,1) {};
    \node[noeud] (t) at (5,1) {};
    \draw[out=60,in=120] (s)to node[noeud,scale=0.8,pos=0.2](x1){} node[noeud,scale=0.8,pos=0.4](x2){} node[noeud,scale=0.8,pos=0.6](x3){} node[noeud,scale=0.8,pos=0.8](x4){} (t);
    \draw[out=-60,in=240] (s)to node[noeud,scale=0.8,pos=0.2](y1){} node[noeud,scale=0.8,pos=0.4](y2){} node[noeud,scale=0.8,pos=0.6](y3){} node[noeud,scale=0.8,pos=0.8](y4){} (t);
    \foreach \I in {1,...,3} {
        \draw (x\I)to(t)to(y\I);
    }
    \foreach \I in {1,...,4} {
        \draw (x\I) node[above,scale=1.25] {$v_{\I}$};
        \pgfmathsetmacro{\J}{int(10-\I)}
        \draw (y\I) node[below,scale=1.25] {$v_{\J}$};
    }
    \draw (s) node[left,scale=1.5] {$s = v_0$};
    \draw (t) node[right,scale=1.5] {$v_5=t$};
\end{tikzpicture}}
    \caption{The shell graph on 10 vertices $\sh_5$.}
    \label{fig:dailly_graph}
\end{figure}

The shell graph is 2-connected, so it contains an upper side $S_1$ and a lower side $S_2$ which can simulate the positive and the negative sides of the real line. The position of the hider will then intuitively correspond to the first encountered open chord to $t$ : if the hider is at position $x>0$ (resp. $x<0$), then $E_*$ will contain all $v_it\in E$ except $v_{\lceil x \rceil}t$ (resp. $v_{\lfloor 2n+x \rfloor}t$). In such a way, any strategy for the \kctp with some competitive ratio $r$, will give a strategy for the linear search problem with asymptotic competitive ratio $r+\varepsilon$ for any $\varepsilon>0$. However, it is known that the linear search problem has an optimal ratio of 9 \cite{BaCuRa93} which gives the lower bound we want on the \kctp. Note however that, in the sketched reduction, small details need to be cared of, for example the distance of the traveller has a unit additive term compared to the searcher on the line (cost of crossing the discovered chord to $t$).
In order to remove any doubt related to these details, we provide below a complete proof of \Cref{thm:lower_bound} (without reducing to the linear search, but sharing some features with the proof of \cite{BaCuRa93}).

\textbf{Formal encoding of a strategy on shell graphs}. Given some positive integer $n$, let $A_n$ be an optimal strategy for graph $\sh_n$, {\em i.e.}, a strategy which minimizes the competitive ratio obtained on road maps $(\sh_n,E_*)$. A first observation is that, following $A_n$, when the traveller stands on some vertex $v_i$, he should always traverse a chord $v_it$ if it is open (the ratio can only increase if the traveller decides to explore the graph a bit longer). Furthermore, when traversing an already visited section of a side, $A_n$ should do it directly, with a simple walk, and avoid multiple crossings of the same edge. Also note that if the traveller, following $A_n$, starts switching sides by coming back to $s$, he should not change his mind and go back to exploring his side, since doing so would incur a cost for no additional information (no edge can be revealed this way). Finally, if the traveller has already explored $\ell$ vertices on a side, when he goes back to exploring this same side, he should always traverse at least $\ell+1$ vertices (so reveal the edges incident with at least one more vertex) before switching again. Doing otherwise would incur a cost for no additional information.

Thus strategy $A_n$ can be described as a sequence of integers $(x_i)_{i\ge 1}$ where $x_i$ represents the budget distance we afford ourselves after coming back to $s$ and switching side. Concretely, first, the traveller selects one side arbitrarily (say $S_1$) and traverses a distance of $x_1$ on the outer face. If an open chord is revealed during his walk, he reaches $t$. Otherwise, after being forced to stay on the outer face, he has no choice but backtrack towards $s$ and traverse the other side $S_2$ for some distance $x_2$. Generally, value $x_i$ denotes the distance budget we allow ourselves to traverse for the $i$th attempt, (on upper side $S_1$ for odd $i$ and bottom side $S_2$ for even $i$) before coming back to $s$ if no \hcorde to $t$ was found. The strategy ends whenever an open edge incident with $t$ is found and traversed. Observe that all values $x_i$ are at least 1. Furthermore, by the last observation of the previous paragraph, we always have $x_{i+2} > x_i$.

Assume the traveller reaches target $t$ on attempt $j+1$. Let $S_i$ be the last side visited, $i \in \{1,2\}$. Compatible with such travel hypothesis, we consider the road map $(\sh_n,E_*)$ with the following blocked edges in $E_*$.
\begin{itemize}
    \item all chords $ut$ where $u$ is on $S_i$ and is at distance at most $x_{j-1}$ from $s$,
    \item all chords $ut$ where $u$ is not on $S_i$.
\end{itemize}

In this way, we force the traveller to reach $t$ via the vertex which was placed just after the last one he visited during attempt $j-1$. We have $\dopt = x_{j-1} + 2$ and the total traversed distance from the beginning of the walk is $2(\sum_{i = 1}^{j}x_i) + x_{j-1} + 2$.

\textbf{Analysis}. Assume that $A_n$ achieves a ratio strictly less than 9 on the road map $(\sh_n,E_*)$. Then, there must be a strictly positive~$\varepsilon$ such that (we set $x_0 = 0$):
\begin{equation*} 
\forall j \geq 1, \quad \frac{2\sum\limits_{i=1}^jx_i+x_{j-1}+2}{x_{j-1}+2}\leq 9-2\varepsilon. 
\end{equation*}

These inequalities can be rewritten as:
\begin{equation*}
   \forall j \geq 1, \quad \left(\sum_{i=0,i\neq {j-1}}^{j}x_i\right)-(3-\varepsilon)x_{j-1}\leq 8-2\varepsilon.
\end{equation*}
They form a system of linear inequalities with a lower triangular matrix. All entries equal~1 on the diagonal and lower, except for elements on the subdiagonal (lower diagonal) which all equal~$-(3-\varepsilon)$. Our system is $M_{j,\varepsilon} \cdot \ve{x}_j \le \ve{b}_{j,\varepsilon}$ with $\ve{b}_{j,\varepsilon} = (8-2\varepsilon) \cdot \textbf{1}$, where $\mbox{\bf x}_j$ is the vector made of first $j$ values of sequence $(x_i)_{i \ge 1}$ and matrix $M_{j,\varepsilon}$ is the following $j \times j$ matrix:

\begin{equation*}
M_{j,\varepsilon}=\begin{pmatrix}
1 & 0 & 0 & \cdots & 0 \\
-(3-\varepsilon) & 1 & 0 & \cdots & 0 \\
1 & -(3-\varepsilon) & 1 & \cdots & 0 \\
\vdots  & \vdots  & \ddots   & \ddots & \vdots  \\
1 & 1 & 1 & -(3-\varepsilon) & 1 
\end{pmatrix}
\end{equation*}

When no ambiguity is present, we shall omit subscripts and write $M$, $\ve{b}$ and $\ve{x}$. Farkas' lemma (hereafter recalled) deals with the existence (or not) of nonnegative solutions for a system of linear inequalities. Since all values $x_i$'s of vector $\ve{x}$ are at least 1, we may shift our vector by 1 and still be nonnegative. In other words, there should exist a nonnegative vector $\ve{x}' = \ve{x} - \textbf{1}$ such that $M \cdot (\ve{x}' + \textbf{1}) \leq \ve{b}$. After rewriting, we get $M\ve{x}' \leq \ve{b}'$ where $\ve{b}'$ is the vector $\ve{b} - M \cdot \textbf{1}$. Note that $M \cdot \textbf{1}$ has the following coordinates: $(1, -2+\varepsilon, -1+\varepsilon, \varepsilon, 1+\varepsilon, 2+\varepsilon, 3+\varepsilon,\ldots)$. Thus the coordinate $b'_i$ of vector $\ve{b}'$ is negative for any $i \geq 12$. 

\begin{equation*}
\ve{b}' = \ve{b} - M \cdot \textbf{1}=\begin{pmatrix}
7 - 2\varepsilon\\
10 - 3\varepsilon \\
9 - 3\varepsilon \\
8 - 3\varepsilon  \\
\vdots \\
12-j-3\varepsilon
\end{pmatrix}
\end{equation*}

We are now ready to establish a relationship between a system of linear inequalities and the competitiveness of strategies $A_n$.

\begin{proposition}
Assume there exists a positive integer $j$ and some real $\varepsilon > 0$ such that the system $M_{j,\varepsilon} \cdot \ve{x}' \le \ve{b}_{j,\varepsilon}'$, $\ve{x}' \ge \textbf{0}$ has no solution. Then, there exists an integer $n_{j, \varepsilon}$ such that strategy $A_n$ has ratio at least $9-2\varepsilon$.
\label{prop-system}
\end{proposition}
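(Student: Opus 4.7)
The plan is to argue by contrapositive: assume that for every positive integer $n$, the strategy $A_n$ achieves competitive ratio strictly less than $9-2\varepsilon$ on every shell road map $(\sh_n,E_*)$, and derive a nonnegative solution to the system, contradicting the hypothesis.

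First I would fix $n=n_{j,\varepsilon}$ large enough so that the graph $\sh_n$ can accommodate at least $j+1$ consecutive attempts of $A_n$. Concretely, since $A_n$ is optimal, any budget $x_i > n-2$ may be capped at $n-2$ without worsening the ratio (a strategy that walks past $v_{n-1}$ either reaches $t$ directly via the outer-face edge $v_{n-1}v_n$ or wastes distance), so I may assume $x_i\le n-2$ for every $i\le j$. With this size assumption, for each $j'\in\{1,\dots,j\}$ the adversarial instance described in the excerpt is actually realizable on $\sh_n$: blocking all chords incident with $t$ except $v_{x_{j'-1}+1}\,t$ on the side explored during attempt $j'+1$ forces the traveller to complete $j'$ fruitless attempts and reach $t$ on attempt $j'+1$.

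Next, for each $j'\in\{1,\dots,j\}$, plugging the total cost $2\sum_{i=1}^{j'}x_i+x_{j'-1}+2$ and the offline optimum $\dopt=x_{j'-1}+2$ into the hypothesis $c_{A_n}(\sh_n,E_*)<9-2\varepsilon$ yields exactly the $j'$-th row of the system $M_{j,\varepsilon}\,\ve{x}_j\le \ve{b}_{j,\varepsilon}$. Since all $j$ adversarial scenarios are realized against the same fixed sequence $(x_i)$ defining $A_n$, the vector $\ve{x}_j=(x_1,\dots,x_j)$ satisfies the whole system simultaneously.

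Finally, because $x_i\ge 1$ for every $i$ (as already noted in the excerpt), the shifted vector $\ve{x}':=\ve{x}_j-\mathbf{1}$ is nonnegative and satisfies $M_{j,\varepsilon}\,\ve{x}' = M_{j,\varepsilon}\,\ve{x}_j - M_{j,\varepsilon}\,\mathbf{1} \le \ve{b}_{j,\varepsilon}-M_{j,\varepsilon}\,\mathbf{1} = \ve{b}'_{j,\varepsilon}$, contradicting the hypothesis that this system has no nonnegative solution. The main obstacle is the technical justification that $n_{j,\varepsilon}$ can be chosen large enough so that all $j$ adversarial scenarios for attempts $1,\dots,j+1$ are simultaneously realizable on the same shell graph $\sh_{n_{j,\varepsilon}}$; this requires carefully ruling out the edge case where $A_n$ uses a budget exceeding $n-2$ and ensuring the exponential growth of the sequence $(x_i)$ (forced by $x_{i+2}>x_i$) does not outrun $n$ within the first $j$ attempts.
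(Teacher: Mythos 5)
Your proposal is correct and follows essentially the same route as the paper: the paper's proof is a two-line contraposition of the "Analysis" paragraph, where the assumption that $A_n$ beats ratio $9-2\varepsilon$ on every shell road map is converted into the system $M_{j,\varepsilon}\cdot \ve{x}' \le \ve{b}'_{j,\varepsilon}$, $\ve{x}'\ge \mathbf{0}$. You additionally spell out the choice of $n_{j,\varepsilon}$ large enough for all $j$ adversarial scenarios to be realizable on the same shell graph, a detail the paper leaves implicit but which is exactly why the statement quantifies over $n_{j,\varepsilon}$.
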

\begin{proof}
From observations above, if all strategies $A_n$ have ratio at most $9-2\varepsilon$ for some $\varepsilon > 0$, then, for any positive integer $j$, the system $M_{j,\varepsilon} \cdot \ve{x}' \le \ve{b}_{j,\varepsilon}'$, $\ve{x}' \ge \textbf{0}$ has necessarily a solution. Using the contraposition for any $\varepsilon > 0$ gives the proof.
\end{proof}

\Cref{prop-system} implies that our lower bound of competitiveness for unit-weighted outerplanar graphs can be proved by showing that some system of linear inequalities has no solution.
We now recall the statement of Farkas' lemma in our context:

\begin{lemma}[Farkas~\cite{farkas1902theorie}, see {\cite[Prop 6.4.3]{Gartner}}]
Exactly one of the following holds:	either the system $M \cdot \ve{x}' \leq \ve{b}' $ has a solution with $\ve{x}' \geq \textbf{0}$, or the system $M^T \cdot \ve{y} \geq \textbf{0}^T$ has a nonnegative solution $\ve{y}$ with $\ve{b}'^T \cdot \ve{y} < 0$.
	\label{lem-farkas}
\end{lemma}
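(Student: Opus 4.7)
The plan is to establish Farkas' lemma in two steps: first verify that the two alternatives are mutually incompatible, and then, assuming the first fails, construct a witness $\ve{y}$ for the second via a separating hyperplane argument applied to a finitely generated convex cone in $\mathbb{R}^m$.

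The easy direction is a short calculation. Suppose both alternatives held: pick $\ve{x}'\ge\textbf{0}$ with $M\ve{x}'\le\ve{b}'$ and $\ve{y}\ge\textbf{0}$ with $M^T\ve{y}\ge\textbf{0}$, and evaluate $\ve{y}^T(M\ve{x}')$ in two ways. On one hand, $\ve{y}^T(M\ve{x}')=(M^T\ve{y})^T\ve{x}'\ge 0$ as the inner product of two nonnegative vectors. On the other hand, $\ve{y}^T(M\ve{x}')\le\ve{y}^T\ve{b}'=\ve{b}'^T\ve{y}$ because $M\ve{x}'\le\ve{b}'$ and $\ve{y}\ge\textbf{0}$. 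Combining forces $\ve{b}'^T\ve{y}\ge 0$, contradicting $\ve{b}'^T\ve{y}<0$. Hence the two systems cannot be simultaneously solvable.

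For the converse, assume the first system has no solution and introduce the set $K=\{M\ve{x}'+\ve{s}:\ve{x}'\ge\textbf{0},\ \ve{s}\ge\textbf{0}\}\subseteq\mathbb{R}^m$, which is the convex cone generated by the columns of $M$ together with the standard basis vectors $\ve{e}_1,\ldots,\ve{e}_m$. Infeasibility of the first system is precisely the statement $\ve{b}'\notin K$. Because $K$ is finitely generated, it is a closed convex cone by the Minkowski--Weyl theorem, so the separating hyperplane theorem applied to $K$ and the external point $\ve{b}'$ provides some $\ve{z}\in\mathbb{R}^m$ satisfying $\ve{z}^T\ve{v}\le 0$ for every $\ve{v}\in K$ (the threshold can be taken to be $0$ because $\textbf{0}\in K$ and $K$ is stable under positive scaling) together with $\ve{z}^T\ve{b}'>0$.

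It then suffices to extract the required vector by specialization. Plugging $\ve{v}=M\ve{e}_j$ into $\ve{z}^T\ve{v}\le 0$ yields $(M^T\ve{z})_j\le 0$ for every $j$, and plugging $\ve{v}=\ve{e}_i$ yields $z_i\le 0$ for every $i$; setting $\ve{y}=-\ve{z}$ therefore gives $\ve{y}\ge\textbf{0}$, $M^T\ve{y}\ge\textbf{0}$, and $\ve{b}'^T\ve{y}=-\ve{z}^T\ve{b}'<0$, which is exactly the second alternative. The main obstacle I anticipate is justifying closedness of $K$: without it the separating hyperplane theorem would supply only weak separation and not the strict inequality $\ve{z}^T\ve{b}'>0$ we need. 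A self-contained alternative that avoids invoking Minkowski--Weyl would be to prove Farkas' lemma directly by Fourier--Motzkin elimination on the system $M\ve{x}'\le\ve{b}'$, $-\ve{x}'\le\textbf{0}$, which produces the dual multipliers constructively by induction on the number of variables.
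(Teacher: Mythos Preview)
Your proof is correct and is one of the standard routes to Farkas' lemma: mutual exclusivity by the obvious inner-product sandwich, then separation of $\ve{b}'$ from the finitely generated (hence closed) cone $K$ to produce the dual certificate. The only point you flag yourself---closedness of $K$---is indeed the crux, and invoking Minkowski--Weyl (or Fourier--Motzkin, as you note) handles it.

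There is nothing to compare against, however: the paper does not prove this lemma at all. It is stated as a classical result with a citation to Farkas and to a textbook (G\"artner), and is then used as a black box in the proof of Proposition~\ref{prop-vector} and Theorem~\ref{thm:lower_bound}. So your write-up supplies strictly more than the paper does for this statement.
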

We now find a nonnegative vector $\ve{y}$ of size $j$ such that $M_{j,\varepsilon}^T \cdot \ve{y} \geq \textbf{0}$ and $\ve{b}_{j,\varepsilon}'^T \cdot \ve{y} < 0$, which will allow us to obtain a contradiction. 

\begin{proposition}
For any $\varepsilon > 0$, there exists a positive integer $j$ and a nonnegative vector $\ve{y}$ of size $j$ such that $M_{j,\varepsilon}^T \cdot \ve{y} \geq \textbf{0}$ and $\ve{b}_{j,\varepsilon}'^T \cdot \ve{y} < 0$.
\label{prop-vector}
\end{proposition}
\begin{proof}

For a given $\varepsilon > 0$, we fix an integer $j$ depending on $\varepsilon$. The choice of value $j$ will be made clear hereafter. For now, consider that $j$ is only greater than $12$. Our construction of $\ve{y}$ consists in identifying a solution $M^T \cdot \ve{y}$ very close to vector $\mathbf{0}$ and then verifying whether $\ve{b}'^T \cdot \ve{y} < 0$. 
	
Let us consider the equation $M^T \cdot \ve{y} = \textbf{0}$, where the coordinates of $\ve{y}$ are $y_1,\ldots y_j$. For any $1\le i\le j-1$, we have:
$$ 
y_i-(3-\varepsilon)y_{i+1}+y_{i+2}+y_{i+3}+...+y_{j}=0.
$$
By subtracting two consecutive such equations, we get that, for any $1\le i\le j-3$:
$$
y_i = (4-\varepsilon) y_{i+1} - (4-\varepsilon) y_{i+2}. 
$$
Reversing the indices, we recognize a linear recurrence relation of depth 2: $u_{n+2} = (4-\varepsilon) u_{n+1} - (4-\varepsilon) u_n$. The characteristic equation is $\lambda^2-(4-\varepsilon)\lambda+(4-\varepsilon)=0$ with roots:

	$$
	\lambda_{1,2}=\frac{(4-\varepsilon)\pm\sqrt{-\varepsilon(4-\varepsilon)}}{2}.
	$$
	
	Observe that if $\varepsilon = 0$, then the characteristic equation has a single root and the sequence is exponentially increasing.
	For $\varepsilon > 0$, however, both roots are complex numbers.

	Since there are two roots, we could rewrite it as $
	u_n=(\sqrt{4-\varepsilon})^n(c_1\cos(\alpha n)+ c_2\sin(\alpha n))$, 
	where $\alpha=\atan\left(\sqrt{\frac{\varepsilon}{4-\varepsilon}}\right)$. Note that when $\varepsilon$ tends to zero, the period of oscillation tends to infinity. The coordinates of the vector $\ve{y}$ will follow, starting from the bottom, the scheme of this sequence $u_n$. We fix as an initial condition $y_j = 1$.
	
	Consequently, $y_{j-1} = 3-\varepsilon$ and the following terms 
	follow the scheme of the 
	sequence. With these initial conditions, we obtain:
	
	\begin{equation}
	y_{j-p}=\frac{2(\sqrt{4-\varepsilon})^{p-1}}{\sqrt{\varepsilon}}\cos(\alpha p-\beta).
	~~~~\ve{y} =
	\begin{pmatrix}
    y_1\\
    \vdots \\
    y_{j-2} \\
    y_{j-1} \\
    y_{j}
    \end{pmatrix} =
	\begin{pmatrix}
    u_{j-1}\\
    \vdots \\
    u_2 \\
    u_1 \\
    u_0
    \end{pmatrix} =
    \begin{pmatrix}
    u_{j-1}\\
    \vdots \\
    (3-\varepsilon)^2 - 1 \\
    3-\varepsilon \\
    1
    \end{pmatrix}
	\label{eq-sequence}
	\end{equation}
	where $\beta=\atan\left(\frac{2-\varepsilon}{\sqrt{\varepsilon(4-\varepsilon)}}\right).$ In this way, $M^T \cdot \ve{y}$ is the vector made of zeros except the last coordinate being equal to 1.
	
	Equation~\eqref{eq-sequence} shows us that we start from positive values ($y_j, y_{j-1}, \ldots$) and then alternate between positive and negative with an increasing amplitude. For any small $\varepsilon > 0$, we select $j$ such that terms $u_0,\ldots,u_{j-12}$ are positive and the last eleven terms $u_{j-11},\ldots,u_{j-1}$ are negative. This choice of $j$ is possible since, for sufficiently small $\varepsilon$, the ``angular speed'' $\alpha$ and the ``shift'' $\beta$ are negligible compared to $\frac{\pi}{2}$. Hence, let $j$ be the integer such that $\alpha(j-11) - \beta \le \frac{\pi}{2}$, while $\alpha(j-10) - \beta > \frac{\pi}{2}$. On vector $\ve{y}$, it means that the $11$ first terms $y_1,\ldots,y_{11}$ are negative while all others are positive. Observe that the more $\varepsilon$ decreases towards $0$, the larger the period of sequence $(u_i)$ is and hence the larger this integer $j$ is. 
	
	Values $y_1,\ldots,y_{11}$, which are the negative values of $\ve{y}$, are replaced by 0: we obtain a new vector $\ve{y}'$. In this way, value $\ve{b}'^T \cdot \ve{y}'$ is negative: first values $y_1',\ldots,y_{11}'$ are zeros, second values $y_{12}',\ldots,y_j'$ are positive while $b_{12}',\ldots,b_j'$ are negative. Moreover, obviously, $\ve{y}' \ge \ve{0}$. Let us check that even with this modification on $11$ entries, we still have $M^T \cdot \ve{y}' \geq \textbf{0}$, allowing us to use Farkas' lemma.
	
	Suppose by way of contradiction that some coordinate of $M^T \cdot \ve{y}'$ is negative: say $y_i' - (3-\varepsilon) y'_{i+1} + y'_{i+2} + \ldots + y'_j < 0.$ As all $y_i'$ are nonnegative, then necessarily $y_{i+1}'$ is positive and thus $y_{i+1}' = y_{i+1}$. Either $y_i' > 0$, so $y_i' = y_i$ and the linear sum is nonnegative from the definition of sequence $(u_i)_{i\ge 1}$, a contradiction; or $y_i' = 0$ but then again the previous equation is positive since $y_i - (3-\varepsilon) y_{i+1}' + y_{i+2}' + \ldots + y_j' = 0$ and we shifted it positively by replacing $y_i$ by $y_i'$ which is greater. In brief, vector $\ve{y}'$ verifies $M^T \cdot \ve{y}' \geq 0$. 
	
	As a conclusion, we identified a nonnegative vector $\ve{y}'$ satisfying the requirements of the proposition.
	\end{proof}
	
	By Farkas's lemma, we conclude that the initial system of Proposition~\ref{prop-system} of inequalities does not have any solution. So there is no strategy with competitive ratio less than 9 on shell graphs.
	
	\medskip

	\lowerbound*

	\begin{proof}. Simply fix $\varepsilon ' = 2\varepsilon > 0$. From \Cref{lem-farkas} and \Cref{prop-vector}, we know that there exists an integer $j$ such that there is no nonnegative solution $\ve{x}'$ of system $M_{j,\varepsilon} \cdot \ve{x}' \le \ve{b}_{j,\varepsilon}'$. From \Cref{prop-system}, there is no deterministic strategy achieving ratio $9-\varepsilon'$ on every road map of the family of graphs $\sh_n$, and hence on the super-family of unit-weighted outerplanar graphs. 
	\end{proof}

\section{The case of arbitrarily weighted outerplanar graphs} \label{sec-weighted}

Given our results on the unit-weighted case (which give as an easy corollary ratio $9S$ for fixed stretch $S$), a natural question is whether we can design a deterministic strategy achieving a constant competitive ratio for the more general family of arbitrarily weighted outerplanar graphs. In this section, we prove that this is impossible and that there are weighted outerplanar graphs on which the competitive ratio obtained is necessarily greater than some function $g(k) \in \Omega(\frac{\ln k}{\ln \ln k})$. 

We introduce a sub-family of outerplanar graphs that will be useful in this section.

\begin{definition}
    Given an outerplanar graph $G$, two of its vertices $s$ and $t$, and an embedding of $G$ on the plane, $G$ is said to be \emph{$(s,t)$-unbalanced} if either it is the single edge $st$, or it is 2-connected and one of its sides contains only $s$ and $t$.
    \label{def-unbalanced}
\end{definition}

In other words, an $(s,t)$-unbalanced outerplanar graph is such that $s$ and $t$ are neighbors on the outer face. Assume without loss of generality that the lower side contains all vertices and that the upper side only contains $s$ and $t$ and simply consists of a single edge $st$. Note that such a graph does not have any vertical chord. We show in the remainder that some competitive ratio $g(k)$ cannot be obtained even on weighted $(s,t)$-unbalanced outerplanar graphs. 

From now on, we fix some positive real value $\varepsilon^* > 0$. This value will be assigned as an edge weight and can be made as small as needed. All other edge weights will be positive integers.

We begin with the definition of a graph transformation $\mathcal{T}$ which takes as input a weighted $(s,t)$-unbalanced outerplanar graph $H=(V,E,\omega)$, and two integers $S$  and $N$. 
The construction of the output graph $\mathcal{T}(H,S,N, \varepsilon^*)$ works as follows:
\begin{itemize}
    \item Create two vertices $s$ and $t$ with an edge $st$ of weight $S$. This edge will stand as the upper side of the graph.
    \item Create $N$ copies of the graph $H$. These copies are denoted by $H^{(1)},\ldots,H^{(N)}$ and the source/target pair of each $H^{(j)}$ is denoted by $(s_j,t_j)$.
    \item Connect in series all copies $H^{(1)},\ldots,H^{(N)}$ from $s$ to $t$ in order to form the lower side of the graph, using their source/target as input/output vertices. That is, merge $s$ with $s_1$, $t_i$ with $s_{i+1}$ for $i \in \{1,\ldots,N-1\}$. Add an edge (called \emph{terminus}) of weight $\varepsilon^*$ between vertices $t_N$ and $t$.
    \item Add all edges $t_j t$ for $1\le j\le N-1$ with weight $\varepsilon^*$.
\end{itemize}

\Cref{fig-yanGraph} illustrates the graph $\mathcal{T}(H,S,N, \varepsilon^*)$ obtained. Observe that, since $H$ is $(s,t)$-unbalanced outerplanar, $\mathcal{T}(H,S,N, \varepsilon^*)$ is outerplanar. Furthermore, it clearly is $(s,t)$-unbalanced outerplanar: the lower side of each copy of $H$ contains all its own vertices, so all vertices are on the lower side, and the edge $st$ exists. We fix $t_0 = s$.

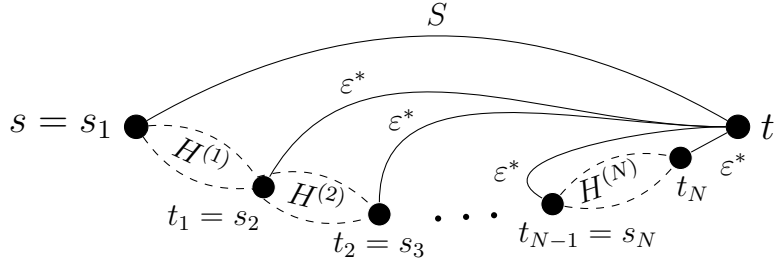
\begin{figure}[h]
	\centering
	\scalebox{1}{\begin{tikzpicture}
	\node[noeud] (s) at (0,2) {};
	\node[noeud] (t) at (8,2) {};
	
	\draw[bend left] (s) to node[above,pos=0.5]{$S$} (t);
	
	\draw[white,bend right] (s) to node[black,pos=0.09,sloped,yshift=1mm]{$H^{(1)}$} node[noeud,pos=0.2](y1){} node[black,pos=0.29,sloped,yshift=1mm]{$H^{(2)}$} node[noeud,pos=0.4](y2){} node[black,sloped,pos=0.55,scale=2]{$\ldots$} node[noeud,pos=0.7](y3){} node[black,pos=0.8,sloped,yshift=1mm]{$H^{(N)}$} node[noeud,pos=0.92](y4){} (t);
	
	\draw[bend left,dashed] (s) to (y1) to (y2);
	\draw[bend right,dashed] (s) to (y1) to (y2);
	\draw[bend left,dashed] (y3) to (y4);
	\draw[bend right,dashed] (y3) to (y4);
	\draw (y4) to node[below right]{$\varepsilon^*$} (t);
	
	\draw[out=60,in=180] (y1) to node[above,pos=0.25]{$\varepsilon^*$} (t);
	\draw[out=90,in=180] (y2) to node[above,pos=0.2]{$\varepsilon^*$} (t);
	\draw[out=150,in=180] (y3) to node[left,pos=0.25]{$\varepsilon^*$} (t);
	
	\draw (s) node[left,scale=1.25,xshift=-1mm] {$s=s_1$};
	\draw (t) node[right,scale=1.25,xshift=1mm] {$t$};
	\draw (y1) node[below left,yshift=-1mm,xshift=1mm] {$t_1=s_2$};
	\draw (y2) node[below,yshift=-1mm] {$t_2=s_3$};
	\draw (y3) node[below right,yshift=-1mm,xshift=-6mm] {$t_{N-1}=s_N$};
	\draw (y4) node[below right,yshift=-1mm,xshift=-2mm] {$t_N$};
\end{tikzpicture}}
	\caption{The graph $\mathcal{T}(H,S,N)$ with its outerplanar embedding.}
	\label{fig-yanGraph}
\end{figure}

We next define several integer sequences that will be useful for our competitive analysis. 

\begin{definition}[Integer sequences] The following sequences are defined for integer indices $i\ge 1$, given some $\varepsilon>0$.
\begin{itemize}
    \item the \emph{ratio sequence}: $r^\varepsilon_i = i + 1 - \varepsilon$,
    \item the \emph{rounded ratio sequence} $r_i= i+1$,
    \item the \emph{weight sequence}, recursively: $S_1 = 2$ and $S_i = S_{i-1}( r_{i-1}  + 1)$, so $S_i = (i+1)!$,
    \item the \emph{copy sequence}: $N_i = \frac{S_i r_{i-1}}{S_{i-1}} = i(i+1)$,
    \item the \emph{blockage sequence}, recursively: $k_1 = 1$ and $k_i = N_i(k_{i-1}+1)$.
\end{itemize}
\label{def:sequences}
\end{definition}

Note that the blockage sequence is asymptotically close to $(i!)^2$, as $N_i = i(i+1)$, but with an extra term. Inductively, we can show a rough upper bound for $k_i$: $k_i \le ((i+1)!)^2$. It works for $i=1$ and, by induction, $k_i \le  i(i+1)(i!)^2 + i(i+1) \le (i!)((i+1)!)\left(i + \frac{i(i+1)}{i!(i+1)!}\right) \le ((i+1)!)^2$.

We present our main technical statement for this section, which will be used to state a lower bound of competitiveness at the end of the section.

\begin{proposition}
\label{prop-yanFamilyOfCounterExamples}
For any nonnegative integer $i$ and any $\varepsilon>0$, there exists a family $\mathcal{R}_i$ of road maps which satisfies the following properties:
\begin{enumerate}
    \item all the road maps of $\mathcal{R}_i$ are defined on the same weighted  $(s,t)$-unbalanced outerplanar graph denoted by $H_i$,
    \item each road map of $\mathcal{R}_i$ has at most $k_i < ((i+1)!)^2$ blockages,
    \item given any deterministic strategy $A$ on $H_i$, there exists a blockage configuration $E_*$ such that: 
    \begin{itemize}
        \item $(G,E_*) \in \mathcal{R}_i$,
        \item the distance traversed $\dtr{A}(G,E_*)$ is at least $S_i = (i+1)!$,
        \item the competitive ratio of $A$ on $(G,E_*)$ is at least $r^\varepsilon_i = i + 1 - \varepsilon$.
    \end{itemize}
\end{enumerate}
\end{proposition}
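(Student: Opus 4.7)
My plan is to proceed by induction on $i$. The base case $i=0$ is immediate: take $H_0$ to be the single edge $st$ of weight $S_0=1$ and $\mathcal{R}_0=\{(H_0,\emptyset)\}$; any strategy traverses exactly $1=S_0$ with ratio $1\ge r_0^\varepsilon$. For the inductive step, I set $H_{i+1}:=\mathcal{T}(H_i,S_{i+1},N_{i+1},\varepsilon^*)$ for some sufficiently small $\varepsilon^*$ fixed at the end in terms of $\varepsilon$ and $i$. The family $\mathcal{R}_{i+1}$ will consist of the road maps whose blockages split as follows: inside each of the $N_{i+1}$ copies $H^{(j)}$, a blockage pattern drawn from $\mathcal{R}_i$ (hence at most $k_i$ per copy); plus a chosen subset of the $N_{i+1}$ edges $\{t_1 t,\dots,t_{N_{i+1}-1} t,t_{N_{i+1}} t\}$ (the shortcuts together with the terminus) to block. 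The total is at most $N_{i+1}(k_i+1)=k_{i+1}$ blockages, so property~2 holds.

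For property~3, I fix a deterministic strategy $A$ on $H_{i+1}$ and build the adversarial $E_*\in\mathcal{R}_{i+1}$ adaptively (which is equivalent to an oblivious choice since $A$ is deterministic). If $A$ takes the top edge $st$ immediately, then $\dtr{A}\ge S_{i+1}$ and the adversary leaves essentially no bottom blockages, pushing opt to roughly $O(\varepsilon^*)$ and sending the ratio well above $r_{i+1}^\varepsilon$. Otherwise $A$ enters the bottom side; upon the first visit to each copy $H^{(j)}$, the adversary plants inside $H^{(j)}$ the inductive-hypothesis blockages keyed to the sub-strategy that $A$ runs inside that copy, thus forcing $A$ to spend $\ge S_i$ there with local ratio $\ge r_i^\varepsilon$. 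Let $D$ denote the maximal depth reached by $A$ on the bottom side before it exits (via an open shortcut, via the terminus, or by backtracking to $s$ and going top).

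The adversary's final commitment is then one of two regimes. \textbf{(R1)} If $D$ is below a threshold $\Theta$, leave shortcut $t_{D+1}t$ open and block the remaining $N_{i+1}-1$ shortcut/terminus edges; $A$ has no choice but to backtrack and take the top, giving $\dtr{A}\ge D\cdot S_i+S_{i+1}$, while the optimum is $\le (D+1)\cdot S_i/r_i^\varepsilon + O(\varepsilon^*)$. \textbf{(R2)} If $D\ge\Theta$, block all $N_{i+1}$ shortcut/terminus edges; $A$ traverses $D$ copies, fails to reach $t$ from the bottom, backtracks through all $D$ copies, and finally takes the top edge, giving $\dtr{A}\ge D\cdot S_i+D\cdot o_{\min}+S_{i+1}$ with opt $=S_{i+1}$. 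A direct arithmetic check using $S_{i+1}=(i+2)S_i$ and $N_{i+1}=(i+1)(i+2)$ shows that, for every possible value of $D\in\{0,\dots,N_{i+1}\}$, at least one of R1 and R2 delivers both $\dtr{A}\ge S_{i+1}$ and ratio $\ge r_{i+1}^\varepsilon=i+2-\varepsilon$; the threshold $\Theta$ is precisely where the two regimes cross.

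The main obstacle will be the bookkeeping in R2: the inductive hypothesis yields only an upper bound $o_j\le S_i/r_i^\varepsilon$ on the optimum inside $H^{(j)}$, but bounding the mandatory return-traversal cost $o_{\min}$ from below requires either a slight strengthening of the inductive statement (also controlling return costs inside $H_i$) or a careful construction of $\mathcal{R}_i$ ensuring that the shortest path in a blocked copy is bounded away from~$0$. A secondary subtlety is handling strategies that enter and leave a copy multiple times, or abandon a copy partway; a monotonicity argument disposes of these, since any such detour only inflates $\dtr{A}$ without revealing new information to $A$. Finally, $\varepsilon^*$ must be chosen sufficiently small in terms of $\varepsilon$, $i$, and $N_{i+1}$ (say $\varepsilon^*\ll \varepsilon/(N_{i+1}\cdot S_{i+1})$) so that all $O(\varepsilon^*)$ additive error terms are absorbed into the $\varepsilon$-slack of $r_{i+1}^\varepsilon$.
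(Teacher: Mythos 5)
Your construction and overall architecture match the paper's: the same transformation $\mathcal{T}$, the same product family $\mathcal{R}_{i+1}$ built from per-copy configurations of $\mathcal{R}_i$ plus a choice of blocked shortcut/terminus edges, and an adversary argument split according to how far the strategy commits to the lower side. The gap is in the quantitative case analysis. Your regimes are parametrized by the copy count $D$, and R1 rests on the claim that the induction hypothesis yields $o_j \le S_i/r^\varepsilon_i$ for the optimum inside each adversarially blocked copy. It does not: the hypothesis bounds the traversed distance and the ratio from below, so it gives $o_j \le \dtr{}/r^\varepsilon_i$ where $\dtr{}$ may exceed $S_i$; in fact $o_j$ can be as large as $S_i$ (already in the base case, the configuration blocking the cheap lower route forces the optimum to equal the top edge's weight). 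With only the a priori bound $o_j \le S_i$, your R1 arithmetic gives roughly $(D+i+2)/(D+1)$, which drops below $i+2-\varepsilon$ as soon as $D\ge 1$. And even granting your claimed bound, R1 degrades to $\approx i+1-\varepsilon$ for large $D$ while R2 only reaches $i+2-\varepsilon$ when $D$ is essentially $N_{i+1}$, so the two regimes do not cross above the target ratio: there is an uncovered middle range of $D$.

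The missing idea is to parametrize not by $D$ but by $d$, the cost of the shortest \emph{open} path from $s$ to the last visited $t_q$, and to use the multiplicative coupling from the induction hypothesis: the outward journey costs at least $r^\varepsilon_i \cdot d$ because each copy is traversed with ratio at least $r^\varepsilon_i$ against its own optimum, and these optima sum to $d$. Then the ``leave the next shortcut open'' adversary gives $\dopt \le d + S_i + O(\varepsilon^*)$ against a traversal of at least $(r^\varepsilon_i+1)d + S_{i+1}$, and this Möbius function of $d$ stays above $r^\varepsilon_i+1 = i+2-\varepsilon$ uniformly in $d$ (this is the paper's ``explore a bit / explore deeply'' analysis); the ``block everything'' adversary is reserved for the terminus case, where $\dopt = S_{i+1}$ exactly and the per-copy distance lower bound $S_i$ suffices since $N_{i+1}S_i/S_{i+1} = r_i$. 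This coupling is what lets the numerator grow with the denominator when the traversed copies happen to have expensive optima, and it removes any need for the per-copy upper bound on $o_j$ or the lower bound on $o_{\min}$ that you flag as open bookkeeping in R2.
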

\begin{proof}
We proceed by induction on $i$. In the induction step, we will focus on some graph $H_{i} = \mathcal{T}(H_{i-1},S_i,N_i, \varepsilon^*)$ and establish a trade-off on the value of the optimal distance between $s$ and the last visited $t_j$ vertex on the lower side.

\medskip
\textit{Base case}. Let $H_1$ be the weighted graph with three vertices $s$, $u$ and $t$, an edge $st$ of weight $2$, an edge $su$ of weight $1$ and an edge $ut$ of weight $\varepsilon^*$. The graph $H_1$ is clearly $(s,t)$-unbalanced as it is outerplanar and the lower side contains the three vertices. We fix $k_1 = 1$. There are two road maps in $\mathcal{R}_1$: either nothing is blocked, or only the edge $ut$ is blocked. On this very restricted graph, there are two strategies for the traveller. The first one is to simply traverse the edge $st$, but if nothing is blocked on the lower side, the competitive ratio obtained is $\frac{2}{1+\varepsilon^*}=2 - \frac{2\varepsilon^*}{1+\varepsilon^*}$, which can be made greater than $r^\varepsilon_1 = 2-\varepsilon$ by choosing $\varepsilon^*$ small enough compared to $\varepsilon$. The second strategy is to traverse the edge $su$ instead of $st$, but $ut$ can be blocked. In this case, the traveller goes back to $s$ and then traverses the open edge $st$ which is the optimal offline path. The competitive ratio is thus exactly $\frac{4}{2} = 2 \geq r^\varepsilon_1$. In summary, it is not possible to achieve a competitive ratio strictly smaller than $r^{\varepsilon}_1$ on $H_1$. Moreover, given any of the two possible strategies, there is a blockage configuration which forces us to walk with a distance of at least $2 = S_1$.

\textit{Induction step}. We assume that the statement above holds for some integer $i-1 \ge 1$. Let $H_{i-1}$ be the $(s,t)$-unbalanced outerplanar graph on which the road maps of $\mathcal{R}_{i-1}$ are defined. We will construct the graph $H_{i}$ by applying the transformation $\mathcal{T}$ on graph $H_{i-1}$. Based on the sequences of \Cref{def:sequences}, we define $H_i = \mathcal{T}(H_{i-1},S_i,N_i, \varepsilon^*)$.

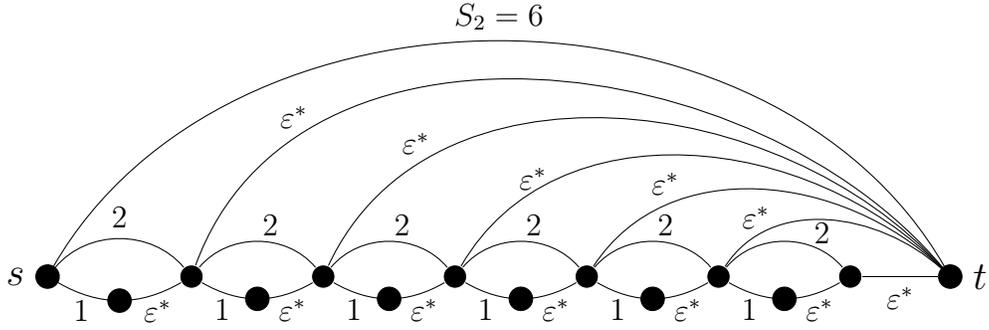
\begin{figure}[h]
	\centering
	\scalebox{1}{\begin{tikzpicture}
	\node[noeud] (s) at (0,2) {};
	\node[noeud] (t) at (12,2) {};
	
	\draw[out=60,in=120] (s) to node[above,pos=0.5]{$S_2 = 6$} (t);
	
	\draw[white] (s) to node[noeud,pos=0.15](y1){} node[noeud,pos=0.3](y2){} 
	node[noeud,pos=0.45](y3){} 
	node[noeud,pos=0.6](y4){}
	node[noeud,pos=0.75](y5){}
	node[noeud,pos=0.9](y6){} (t);
	
\draw[out=50,in=130] (s) to node[above,pos=0.5]{$2$} (y1);
\foreach \i in {1,...,4}
{
    \pgfmathtruncatemacro{\j}{\i + 1}
	\draw[out=50,in=130] (y\i) to node[above,pos=0.6,yshift=-0.5mm]{$2$} (y\j);
}
\draw[out=50,in=130] (y5) to node[above,pos=0.8,yshift=-0.5mm]{$2$} (y6);
    
\draw[bend right] (s) to node[pos=0.25,xshift=-1mm,yshift=-2mm]{1} node[noeud,pos=0.5](y1-b){} node[pos=0.75,xshift=1mm,yshift=-2mm]{$\varepsilon^*$} (y1);
\foreach \i in {2,...,6}
{
    \pgfmathtruncatemacro{\j}{\i - 1}
    \draw[bend right] (y\j) to node[pos=0.25,xshift=-1mm,yshift=-2mm]{1} node[noeud,pos=0.5](y\i-b){} node[pos=0.75,xshift=1mm,yshift=-2mm]{$\varepsilon^*$} (y\i);
}
	\draw (y6) to node[below]{$\varepsilon^*$} (t);
	
	\draw[out=70,in=130] (y1) to node[above,pos=0.2]{$\varepsilon^*$} (t);
	\draw[out=65,in=132] (y2) to node[above,pos=0.2]{$\varepsilon^*$} (t);
	\draw[out=60,in=134] (y3) to node[left,pos=0.25]{$\varepsilon^*$} (t);
	\draw[out=57,in=136] (y4) to node[above,pos=0.25]{$\varepsilon^*$} (t);
	\draw[out=57,in=138] (y5) to node[left,pos=0.3,yshift=1mm]{$\varepsilon^*$} (t);
	
 	\draw (s) node[left,scale=1.25,xshift=-1mm] {$s$};
 	\draw (t) node[right,scale=1.25,xshift=1mm] {$t$};
\end{tikzpicture}}
	\caption{Example of the graph $H_2$ with $N_2 = 6$ copies of $H_1$ on the lower side.}
	\label{fig-exampleH2}
\end{figure}

Together with $H_i$, we consider sets of blocked edges $E_*$ which contain some edges $t_j t$ but also some edges lying inside the copies $H_{i-1}^{(j)}$. However, $st$ will never be in $E_*$. The road maps of $\mathcal{R}_i$ are all the road maps obtained by considering any combination of blockages for each copy of $H_{i-1}$ inside $\mathcal{R}_{i-1}$. Said differently, each copy $H_{i-1}^{(j)}$ together with its blockages form potentially any road map of $\mathcal{R}_{i-1}$. Moreover, edges $t_jt$ can also be blocked. All the settings consistent with this description give us the collection $\mathcal{R}_i$.

Note that, as a consequence of this construction, each time the traveller traverses entirely a copy $H_{i-1}^{(j)}$ (from $s_j$ to $t_j$), there exists a blockage set $E_*$ making him traverse a distance at least $r^\varepsilon_{i-1}$ times the shortest open $(s_j,t_j)$-path, from the induction hypothesis. Furthermore, the number of blockages is at most $N_i(k_{i-1}+1)$, as there are $N_i$ copies and also $N_i$ edges $t_jt$ with $1\le j \le N$ (including the terminus). Hence, each road map of $\mathcal{R}_i$ contains at most $k_i$ blocked edges.
In order to prove the third statement of our proposition, we distinguish three types of strategies.

\smallskip\noindent\textbf{Go to terminus.} A first (but naive) strategy for the traveller could be to try to reach $t$ without traversing the edge $st$, {\em i.e.}, by going only through the lower side. Obviously, the competitiveness is strongly decreased if not only the terminus $t_N t$ is blocked but also all chords $t_j t$ with $j\ge 1$. In this case, the traveller has to go back to $s$ and traverse $st$. Considering this worst blockage set for our strategy, the distance traversed is at least $\dtr{} + d + S_i$, where $\dtr{}$ is the distance traversed in the outward journey from $s$ to $t_N$ and $d$ is the cost of the shortest open path from $s$ to $t_N$. From the induction hypothesis, we have $\dtr{} \ge S_{i-1}N_i$, as we must traverse $N_i$ copies of $H_{i-1}$ in series. The optimal offline cost of the road map is $S_i$, since there is a single open $(s,t)$-path (the edge $st$ itself). In this scenario, the competitive ratio $c$ of such strategy is at least:
\[
c \ge \frac{\dtr{} + d + S_i}{S_i} > 1 + \frac{\dtr{}}{S_i} \ge 1 + N_i\frac{S_{i-1}}{S_i} = 1 + r_{i-1} = i+1.
\]
The last simplification comes from the definition of the sequence $N_i$. Hence, such a strategy leads to a competitive ratio at least $r_i = i + 1$. We next focus on different strategies where the lower side is only partially explored.

\begin{figure}[t]
	\centering
	\scalebox{1}{\begin{tikzpicture}
	\node[noeud] (s) at (0,2) {};
	\node (s1) at (0,0.9) {};
	\node (s2) at (0,0.5) {};
	\node (s3) at (0,3.1) {};
	\node[noeud] (t) at (8,2) {};
	\node (t1) at (7.5,0.9) {};
	\node (t2) at (7.5,0.5) {};
	\node (t3) at (8,3.1) {};
	
	\draw[out=40,in=140] (s) to node[above,pos=0.5]{$S_i$} (t);
	
	\draw[white,bend right] (s) to node[black,pos=0.09,sloped,yshift=1mm]{$H_{i-1}^{(1)}$} node[noeud,pos=0.2](y1){} node[black,pos=0.31,sloped,yshift=1mm]{$H_{i-1}^{(2)}$} node[noeud,pos=0.43](y2){} node[black,sloped,pos=0.56,scale=2]{$\ldots$} node[noeud,pos=0.7](y3){} node[black,pos=0.81,sloped,yshift=1mm]{$H_{i-1}^{(N_i)}$} node[noeud,pos=0.92](y4){} (t);
	
	\draw[out=20,in=110,dashed] (s) to (y1) to (y2);
	\draw[out=-60,in=-160,dashed] (s) to (y1) to (y2);
	\draw[out=60,in=160,dashed] (y3) to (y4);
	\draw[out=-10,in=-110,dashed] (y3) to (y4);
	\draw[red] (y4) to node[below right]{$\varepsilon^*$} (t);
	
	\draw[red,out=60,in=180] (y1) to node[above,pos=0.25]{$\varepsilon^*$} (t);
	\draw[red,out=90,in=180] (y2) to node[above,pos=0.2]{$\varepsilon^*$} (t);
	\draw[red,out=150,in=180] (y3) to node[left,pos=0.25]{$\varepsilon^*$} (t);
	
	\draw (s) node[left,scale=1.25,xshift=-1mm] {$s$};
	\draw (t) node[right,scale=1.25,xshift=1mm] {$t$};
	\draw (y4) node[below right,yshift=-1mm,xshift=-2mm] {$t_N$};
	
	\draw[blue,bend right,line width = 2pt,->] (s1) to node[above,pos=0.5,yshift=1mm]{outward: $\ge N_iS_{i-1}$} (t1);
	\draw[orange,bend left,line width = 2pt,->] (t2) to node[below,pos=0.5]{return: $\ge d$} (s2);
	\draw[black!10!green,bend left,line width = 2pt,->] (s3) to node[above,pos=0.5]{last traversal: $= S_i$} (t3);
\end{tikzpicture}}
	\caption{Lower bound on the distance traversed over graph $H_i$ when we go to the terminus edge: the outward journey to $t_N$ (in blue), the return journey to $s$ (in orange) and the traversal of edge $st$ (in green). All edges $t_j t$ with $1\le j\le N$ are blocked (in red).}
	\label{fig-trajetYan}
\end{figure}
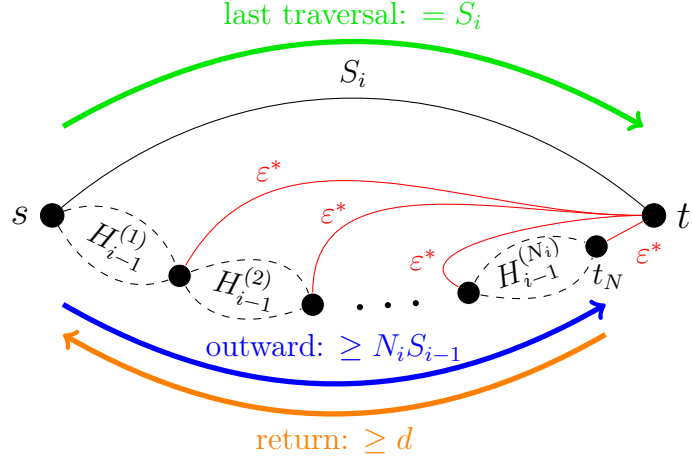

\smallskip\noindent\textbf{Explore deeply the lower side.} A second strategy consists in exploring the lower side with a significant distance (but without reaching the terminus), coming back to $s$ if no edge $t_j t$ allowed us to reach $t$, and then traverse $st$. Let $t_q$ be the last visited $t_j$ on the lower side (said differently, the highest value of $j$ such that $t_j$ has been explored by the traveller): we have $0\le q < N$. Let $d$ be the cost of the shortest open path from $s$ to $t_q$. We suppose that $d >S_i - S_{i-1}$. The reverse inequality will be treated later.

The distance traversed by the traveller is at least $r^{\varepsilon}_{i-1}d + d + S_i$. Indeed, the cost of the outward journey towards $t_q$ is at least the optimal open distance multiplied by $r^{\varepsilon}_{i-1}$ from induction hypothesis, as we traverse copies of $H_{i-1}$ in series. Concerning the optimal offline cost of our road map, we have $\dopt \le S_i$ since edge $st$ is open in any instance of $\mathcal{R}_i$. The competitive ratio $c$ of this strategy is thus at least:

\begin{align*}
	c \ge \frac{r^{\varepsilon}_{i-1}d + d + S_i}{S_i} &  > 1 + (r^{\varepsilon}_{i-1} + 1)\frac{S_i-S_{i-1}}{S_i} \\
	& = r^{\varepsilon}_{i-1} + 1 + 1 - (r^{\varepsilon}_{i-1} + 1)\frac{S_{i-1}}{S_i} \\
	& = r^{\varepsilon}_{i-1} + 1 + 1 - \frac{(r^{\varepsilon}_{i-1} + 1)}{(r_{i-1} + 1)} \\
	& \ge r^{\varepsilon}_{i-1} + 1.
\end{align*}

\smallskip\noindent\textbf{Explore a bit the lower side.} The last strategy consists in exploring the lower side with a small budget (or not exploring it at all!). This last case corresponds to the inequality $0\le d\le S_i - S_{i-1}$, completing the analysis of the previous strategy. We keep $t_q$ as the notation for the last visited $t_j$. A possible setting of the blockages is the following: the edges $t_q t_{q+1}$ and $t_{q+1} t$ are open. In this scenario, $\dopt \le d + S_{i-1} +\varepsilon^*$. Indeed, there is an open $(s,t)$-path with cost $d + S_{i-1} +\varepsilon^*$: go from $s$ to $t_q$ with the shortest open $(s,t_q)$-path, traverse chord $t_q t_{q+1}$ of weight $S_{i-1}$ and finally chord $t_{q+1} t$. The competitive ratio $c$ of our strategy satisfies, provided that  $\varepsilon^*\leq \varepsilon/2$ :

	\begin{align*}
		c \ge \frac{r^{\varepsilon}_{i-1}d + d + S_i}{d + S_{i-1}+\varepsilon^*} &  = \frac{(r^{\varepsilon}_{i-1} + 1)(d + S_{i-1} +\varepsilon^*) - (r^{\varepsilon}_{i-1} + 1)(S_{i-1}+\varepsilon^*) + S_i}{d + S_{i-1}+\varepsilon^*} \\
		& = r^{\varepsilon}_{i-1} + 1 + \frac{S_i - (r^{\varepsilon}_{i-1} + 1)(S_{i-1}+\varepsilon^*)}{d+S_{i-1}+\varepsilon^*} \\
		& = r^{\varepsilon}_{i-1} + 1 + \frac{S_{i-1}( r_{i-1}  + 1) - (r^{\varepsilon}_{i-1} + 1)(S_{i-1}+\varepsilon)}{d+S_{i-1}+\varepsilon^*}\\
		& = r^{\varepsilon}_{i-1} + 1 + \frac{S_{i-1}(r_{i-1}-r^{\varepsilon}_{i-1}) - \varepsilon^*(r^{\varepsilon}_{i-1}+1)}{d+S_{i-1}+\varepsilon^*} \\
		& = r^{\varepsilon}_{i-1} + 1 + \frac{S_{i-1}\varepsilon- \varepsilon^*(r^{\varepsilon}_{i-1}+1)}{d+S_{i-1}+\varepsilon^*} \\
		& \geq r^{\varepsilon}_{i-1} + 1 + \frac{\varepsilon}{2}\frac{2S_{i-1}- (r^{\varepsilon}_{i-1} +1)}{d+S_{i-1}+\varepsilon^*} \\
		& \ge r^{\varepsilon}_{i-1} + 1.
	\end{align*}

The last simplification comes from the definition of the sequence $S_i$ and the fact that $2S_{i-1}=2(i+1)!\geq i+1 + \varepsilon$.

\smallskip
These three types of strategies cover all the different possibilities for the traveller to traverse $H_i$. In each case, there is a blockage set of $\mathcal{R}_i$ which makes the competitive ratio be at least $r^{\varepsilon}_i = r^{\varepsilon}_{i-1} + 1$. Observe that, in each of these worst cases, the distance traversed is at least $r^{\varepsilon}_{i-1}d + d + S_i \ge S_i$. This completes the induction step, and thus the third statement holds.
\end{proof}

The consequence of \Cref{prop-yanFamilyOfCounterExamples} is that for any parameter $k\ge 1$, there exists some family of road maps $\mathcal{R}$ for the \kctp on which competitive ratio $r$ cannot be attained, with $r$ being the largest integer such that $k \ge (r!)^2$. The proof of this claim comes by simply considering the set of instances $\mathcal{R}_i$, where $i$ is the largest integer such that $k \ge ((i+1)!)^2$. Hereafter, we express the ratio $r$ in function of $k$.

\weightedBound*

\begin{proof}
    Let $i$ be the largest positive integer such that $k \ge ((i+1)!)^2 \ge k_i$. On the collection of road maps $\mathcal{R}_i$, for every $\varepsilon > 0$, no  strategy can obtain competitive ratio $r^\varepsilon_i = i+1-\varepsilon$ and all these road maps contain less than $k_i \le ((i+1)!)^2\le k$ blockages. From the definition of $i$, we have $k < ((i+2)!)^2$ and hence $\ln k < 2\ln ((i+2)!) < 2\ln ((i+2)^{i+2}) = 2\ln (i+2) e^{\ln (i+2)}$. We obtain:
    \[
    \frac{\ln k}{2} < \ln (i+2) e^{\ln (i+2)}
    \]
    The Lambert W function is the converse function of $x \rightarrow xe^x$ (more details in~\cite{CoGo96}). We associate $\ln (i+2)$ to the entry of this function, and we have:
    \[
    \ln (i+2) > W\left(\frac{\ln k}{2}\right)
    \]
    Going back to the definition of the sequence $r_i$, we have $r_i > e^{W(\frac{\ln k}{2})} - 1$. So, for any integer $k \ge 1$, it is possible to identify a collection of weighted outerplanar graphs on which ratio $e^{W(\frac{\ln k}{2})} - 1$ cannot be attained.
\end{proof}

\section{Perspectives} \label{sec-conclusion}

We highlighted a non-trivial unit-weighted family of graphs (outerplanar) for which there exists a deterministic strategy with constant competitive ratio 9, which is optimal. However, we proved that no constant competitive ratio can be achieved for arbitrarily weighted outerplanar graphs. Several questions arise.

Since some sub-families of outerplanar graphs have constant competitive ratio in the weighted case (trees and cycles, which imply cacti from \Cref{lem-articulationPoints}) while a very close super-family admits the general bound $2k+1$ in the unit-weighted case (planar of treewidth~$2$), a natural question is to investigate where the competitive gaps lie in both cases.
For the unit-weighted case, future research could focus on the natural extension of $p$-outerplanar graphs~\cite{Ba94}, with $p$ successive outer faces, in order to generalize constant competitiveness.
Coming back to arbitrarily weighted outerplanar graphs, a natural question is whether one can achieve a competitive ratio $O(\log k)$ on this family of graphs. Such a positive result would close almost entirely the competitiveness gap for outerplanar graphs, thanks to our lower bound (Theorem~\ref{thm:weightedbound}).

To achieve constant competitive ratio on arbitrarily weighted graphs, a good candidate could be graphs with bounded-sized minimal edge $(s,t)$-cuts, for which ratio $\sqrt{2}k + O(1)$ is known~\cite{BeSa23}. Observe that our construction $\mathcal{T}$ 
increases the size of edge $(s,t)$-cuts.
We conjecture that there exists a polynomial-time deterministic strategy achieving constant competitive ratio on graphs with edge $(s,t)$-cuts of bounded size.

\bibliographystyle{plain}
\bibliography{ctp}

\begin{thebibliography}{10}

\bibitem{AkSaAr16}
V.~Aksakalli, O.~F. Sahin, and I.~Ari.
\newblock An {AO\({}^{\mbox{*}}\)} based exact algorithm for the canadian
  traveler problem.
\newblock {\em {INFORMS} Journal on Computing}, 28(1):96--111, 2016.

\bibitem{AlYiAk21}
A.~F. Alkaya, S.~Yildirim, and V.~Aksakalli.
\newblock Heuristics for the {Canadian Traveler Problem} with neutralizations.
\newblock {\em Comput. Ind. Eng.}, 159:107488, 2021.

\bibitem{BaCuRa93}
R.~A. Baeza{-}Yates, J.~C. Culberson, and G.~J.~E. Rawlins.
\newblock Searching in the plane.
\newblock {\em Inf. Comput.}, 106(2):234--252, 1993.

\bibitem{Ba94}
B.~S. Baker.
\newblock Approximation algorithms for {NP}-complete problems on planar graphs.
\newblock {\em J. {ACM}}, 41(1):153--180, 1994.

\bibitem{BaSc91}
A.~Bar{-}Noy and B.~Schieber.
\newblock The {C}anadian {T}raveller {P}roblem.
\newblock In {\em Proc. of ACM/SIAM SODA}, pages 261--270, 1991.

\bibitem{BeBe24}
L.~Beaudou, P.~Berg{\'{e}}, V.~Chernyshev, A.~Dailly, Y.~Gerard, A.~Lagoutte,
  V.~Limouzy, and L.~Pastor.
\newblock The {Canadian Traveller Problem} on outerplanar graphs.
\newblock In {\em Procs. of {MFCS}}, volume 306 of {\em LIPIcs}, pages
  19:1--19:16, 2024.

\bibitem{BeNe70}
A.~Beck and D.~J. Newman.
\newblock Yet more on the linear search problem.
\newblock {\em Isr. J. Math.}, 8(4):419--429, 1970.

\bibitem{BeBa23}
J.~Becker and R.~Batta.
\newblock Canadian prize collection problem.
\newblock {\em Military Operations Research}, 28(2):55--92, 2023.

\bibitem{Be63}
R.~Bellman.
\newblock Problem 63-9, an optimal search.
\newblock {\em SIAM review}, 5(3):274--274, 1963.

\bibitem{BeWe15}
M.~Bender and S.~Westphal.
\newblock An optimal randomized online algorithm for the k-{Canadian}
  {Traveller} {Problem} on node-disjoint paths.
\newblock {\em J. Comb. Optim.}, 30(1):87--96, 2015.

\bibitem{BeDeGuLe19}
P.~Berg{\'{e}}, J.~Desmarchelier, W.~Guo, A.~Lefebvre, A.~Rimmel, and
  J.~Tomasik.
\newblock Multiple canadians on the road: minimizing the distance competitive
  ratio.
\newblock {\em J. Comb. Optim.}, 38(4):1086--1100, 2019.

\bibitem{BeSa23}
P.~Berg{\'{e}} and L.~Sala{\"{u}}n.
\newblock The influence of maximum (\emph{s}, \emph{t})-cuts on the
  competitiveness of deterministic strategies for the {Canadian Traveller
  Problem}.
\newblock {\em Theor. Comput. Sci.}, 941:221--240, 2023.

\bibitem{BnFeSh09}
Z.~Bnaya, A.~Felner, and S.~E. Shimony.
\newblock Canadian traveler problem with remote sensing.
\newblock In {\em Proc. of {IJCAI}}, pages 437--442, 2009.

\bibitem{BoEl98}
A.~Borodin and R.~El{-}Yaniv.
\newblock {\em Online computation and competitive analysis}.
\newblock Cambridge Univ. Press, 1998.

\bibitem{ChChWuWu15}
H.~Chan, J.~Chang, H.~Wu, and T.~Wu.
\newblock {The $k$-Canadian Traveller Problem on Equal-Weight Graphs}.
\newblock {\em Proc. of WCMCT}, pages 135--137, 2015.

\bibitem{CoGo96}
R.~M. Corless, G.~H. Gonnet, D.~E.~G. Hare, D.~J. Jeffrey, and D.~E. Knuth.
\newblock On the lambert {W} function.
\newblock {\em Advances in Computational mathematics}, 5:329--359, 1996.

\bibitem{DeHuLiSa14}
E.~D. Demaine, Y.~Huang, C.~Liao, and K.~Sadakane.
\newblock Canadians {Should} {Travel} {Randomly}.
\newblock {\em Proc. of ICALP}, pages 380--391, 2014.

\bibitem{Di12}
R.~Diestel.
\newblock {\em Graph Theory, 4th Edition}, volume 173 of {\em Graduate texts in
  mathematics}.
\newblock Springer, 2012.

\bibitem{EyKeHe10}
P.~Eyerich, T.~Keller, and M.~Helmert.
\newblock High-quality policies for the {Canadian Traveler's Problem}.
\newblock In {\em Proces. of {AAAI}}, pages 51--58. {AAAI} Press, 2010.

\bibitem{farkas1902theorie}
J.~Farkas.
\newblock Theorie der einfachen ungleichungen.
\newblock {\em Journal f{\"u}r die reine und angewandte Mathematik (Crelles
  Journal)}, 1902(124):1--27, 1902.

\bibitem{fried2013complexity}
D.~Fried, S.~E. Shimony, A.~Benbassat, and C.~Wenner.
\newblock Complexity of canadian traveler problem variants.
\newblock {\em Theor. Comput. Sci.}, 487:1--16, 2013.

\bibitem{Gartner}
Ji\v{r}\'{\i} G\"{a}rtner, Bernd;~Matou\v{s}ek.
\newblock {\em Understanding and Using Linear Programming.}, page 81–104.
\newblock Berlin: Springer., 2006.

\bibitem{HaXe23}
N.~Hahn and M.~Xefteris.
\newblock The covering canadian traveller problem revisited.
\newblock In {\em Procs. of {MFCS}}, volume 272 of {\em LIPIcs}, pages
  53:1--53:12, 2023.

\bibitem{KaReTa96}
M.-Y. Kao, J.~H. Reif, and S.~T. Tate.
\newblock Searching in an unknown environment: An optimal randomized algorithm
  for the cow-path problem.
\newblock {\em Information and computation}, 131(1):63--79, 1996.

\bibitem{LiHu14}
C.~Liao and Y.~Huang.
\newblock The covering canadian traveller problem.
\newblock {\em Theor. Comput. Sci.}, 530:80--88, 2014.

\bibitem{LiScTh01}
L.~V. Lita, J.~Schulte, and S.~Thrun.
\newblock A system for multi-agent coordination in uncertain environments.
\newblock In {\em Procs. of {AGENTS}}, pages 21--22. {ACM}, 2001.

\bibitem{PaYa91}
C.~Papadimitriou and M.~Yannakakis.
\newblock Shortest paths without a map.
\newblock {\em Theor. Comput. Sci.}, 84(1):127--150, 1991.

\bibitem{ShSa17}
D.~Shiri and F.~S. Salman.
\newblock {On the online multi-agent {$O-D$} $k$-Canadian Traveler Problem}.
\newblock {\em J. Comb. Optim.}, 34(2):453--461, 2017.

\bibitem{We08}
S.~Westphal.
\newblock A note on the $k$-{Canadian} {Traveller} {Problem}.
\newblock {\em Inform. Proces. Lett.}, 106(3):87--89, 2008.

\bibitem{XuHuSuZh09}
Y.~Xu, M.~Hu, B.~Su, B.~Zhu, and Z.~Zhu.
\newblock The {C}anadian traveller problem and its competitive analysis.
\newblock {\em J. Comb. Optim.}, 18(2):195--205, 2009.

\end{thebibliography}

\end{document}